\renewcommand{\l}{\ell}
\renewcommand{\phi}{\varphi}
\renewcommand{\epsilon}{\varepsilon}
\newcommand{\boxtheorem}{\hfill $\Box$}
\newcommand{\ignore}[1]{}
\newcommand{\OMIT}[1]{}
\newcommand{\A}{{\cal A}}
\newcommand{\V}{\mathbf{V}}
\renewcommand{\exp}{R}
\newcommand{\nexpr}{R}
\newcommand{\nex}{{\it n}}
\newcommand{\sem}[1]{\llbracket #1 \rrbracket}
\newcommand{\ptime}{{\sc Ptime}}
\newcommand{\pspace}{{\sc Pspace}}
\newcommand{\vv}[1]{{\small \texttt{#1}}}
\newtheorem{theorem}{Theorem}[section]
\newtheorem{lemma}[theorem]{Lemma}
\newtheorem{proposition}[theorem]{Proposition}
\newtheorem{example}[theorem]{Example}
\newcommand{\com}{\%}
\newcommand{\fin}{\&}
\newcommand{\finbra}{\$}
\newcommand{\trans}{\text{\it trans}}
\newcommand{\posi}{\text{\it pos}}
\renewcommand{\A}{A}
\newenvironment{proof}
{\par\noindent\textit{Proof: }}
{\boxtheorem}
\title{Containment of Nested Regular Expressions} 
\author{Juan Reutter}
\date{}
\begin{document}

\maketitle 

\begin{abstract}
Nested regular expressions (NREs) have been proposed as a powerful formalism for querying RDFS graphs, but research in a more general graph database context has been scarce, and static analysis results are currently lacking. In this paper we investigate the problem of containment of NREs, and show that it can be solved in PSPACE, i.e., the same complexity as the problem of containment of regular expressions or regular path queries (RPQs).\end{abstract}



\section{Introduction}


Graph-structured data has become pervasive
in 
data-centric applications.
Social networks, bioinformatics, astronomic databases, 
digital libraries, Semantic Web, and
linked government data, are only a few examples  
of applications in which 
structuring data as graphs is, simply, essential.

Traditional relational query languages do not
appropriately cope with the querying problematics raised by 
graph-structured data. The reason for this is twofold. First, 
in the context of graph databases one is typically interested
in {\em navigational} queries, i.e. queries that traverse the
edges of the graph while checking for the existence of paths satisfying
certain conditions. However, most relational query
languages, such as SQL, are not designed to deal with this kind of
recursive queries \cite{AHV95}. Second, current graph database
applications tend to be massive in size (think, for instance, of social
networks or astronomic databases, that may store terabytes
of information). Thus, one can immediately dismiss any 
query language that cannot be
evaluated in polynomial time (or even in linear time!). 
But then even the core of the usual relational query
languages -- {\em conjunctive} queries (CQs) -- does not satisfy
this property. In fact, parameterized complexity analysis tells us that
-- under widely-held complexity theoretical analysis -- 
CQs over graph databases cannot be evaluated in time $0(|G|^c \cdot
f(|\phi|))$, where $c \geq 1$ is a constant and $f : \mathbb{N} \to
\mathbb{N}$ is a computable function 
\cite{PY99}.

This raises a need for languages that are specific for the graph database context. 
The most commonly used {\em core} of these languages are the so-called {\em regular path queries}, or RPQs \cite{CMW87}, 
that specify the existence of paths between nodes, with the
restriction that the labels of such path belong to a regular language.
The language of RPQs was later extended with the ability to traverse edge backwards, providing them with a 
{\em 2-way} functionality. This gives rise to the notion of 2RPQs \cite{CGLV00}. 

Nested regular expressions are a graph database language that aims to extend 
the possibility of using regular expressions, or 2-way regular expressions, for querying graphs 
with an {\em existential test}
operator $[(\cdot)]$, also known as {\em nesting} operator, similar to the one in XPath \cite{GKP05}. 
This class of expressions was proposed in~\cite{PAG10} 
for querying Semantic Web data, and have received a fair deal of attention in 
the last years \cite{RDFS,BPR12,BPR13}. 

We say that 
Here we study the problem of containment of NREs, 
which is the following problem: 

\begin{center}
\fbox{
\begin{tabular}{ll}
{Problem}: & {\sc NREContainment}\\ 
{Input}: & 
NREs $Q_1$ and $Q_2$ over $\Sigma$.  \\
{Question}: & Is $Q_1 \subseteq Q_2$?
\end{tabular}
}
\end{center}

Note that we study this problem for the restricted case when all the possible input graphs 
are {\em semipaths}. The general case will be shown in an extended version of the manuscript. 

\section{Preliminaries}
\label{sec:prelim} 

\subsection{Graph Database and queries}
\noindent
{\bf Graph databases}. 
Let $\V$ be a countably infinite set of \emph{node ids}, and $\Sigma$ 
a finite alphabet. A \emph{graph database} $G$
over $\Sigma$ is a pair $(V,E)$, where $V$ is a
finite set of node ids (that is $V$ is a finite subset of $\V$) and $E
\subseteq V \times \Sigma \times V$. 
That is, $G$ is an {\em edge-labeled} directed graph, where the fact that
$(u,a,v)$ belongs to $E$ means that there is an edge from node
$u$ into node $v$ labeled $a$. 
For a graph database $G=(V,E)$,
we write $(u,a,v)\in G$ whenever $(u,a,v)\in E$.

\begin{figure*}[t!]
\begin{center}
\hspace*{-60pt} \input{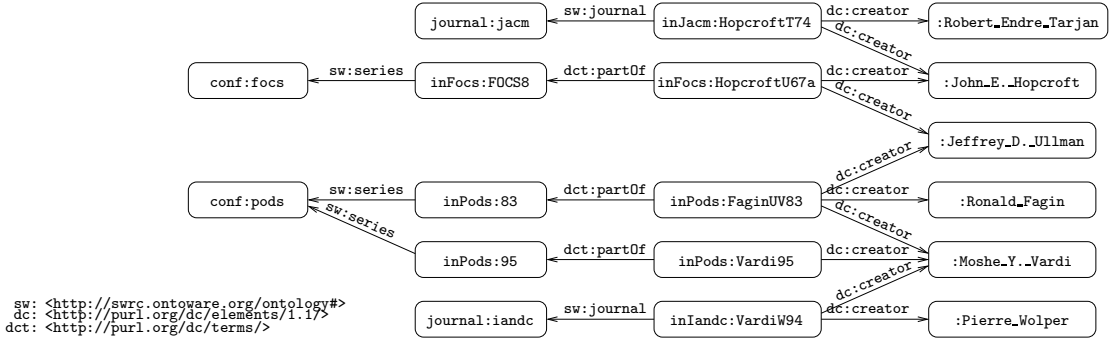}
\end{center}
\vspace*{-15pt}
\caption{\normalfont A fragment of the RDF Linked Data representation of DBLP~\cite{dblp.l3s} 
available at \vv{http://dblp.l3s.de/d2r/}}
\label{fig:main}
\end{figure*}

\medskip 
\noindent
{\bf Nested Regular Expressions}. 

The language of {\em nested regular
expressions} (NREs) were first proposed in \cite{PAG10} for querying
Semantic Web data. Next we formalize the language of nested regular
expressions in the context of graph databases.

Let $\Sigma$ be a finite alphabet. 
The NREs over 
$\Sigma$ extend classical regular expressions with an
\emph{existential nesting test} operator $[\, \cdot \,]$ (or just nesting operator, for short), 
and an \emph{inverse} operator $a^-$, over each $a\in\Sigma$.
The syntax of NREs is given by the following grammar:
\begin{multline*}
\nexpr \; := \;\;
\varepsilon \;\mid\; a\; (a\in \Sigma) \;\mid\; a^-\; (a\in \Sigma) \;\mid\; \nexpr\cdot \nexpr \\ 
\nexpr^* \; \mid \; \nexpr + \nexpr \;\mid\; [\nexpr]  \label{eq:grammar}
\end{multline*}
As it is customary, we use $\nex^+$ as shortcut for $\nex\cdot \nex^*$.

Intuitively, 
NREs specify pairs of node ids in a graph database, 
subject to the existence of a 
path satisfying a certain regular condition among them.
That is, each NRE $\nexpr$ defines a binary relation $\sem{\nexpr}_G$ when evaluated over a
graph database $G$.  This binary relation is defined inductively as follows,
where we assume that $a$ is a symbol in $\Sigma$, and $n$, $n_1$ and
$n_2$ are arbitrary NREs:  
\begin{eqnarray*}
\sem{\varepsilon}_G & = & \{(u,u)\mid u\text{ is a node id in }G\} \\
\sem{a}_G & = & \{(u,v)\mid (u,a,v)\in G\} \\ \sem{a^-}_G & = &
\{(u,v)\mid (v,a,u)\in G\} \\ \sem{\nex_1\cdot\nex_2}_G & = &
\sem{\nex_1}_G\circ \sem{\nex_2}_G \\ 
\sem{\nex_1 + \nex_2}_G & = & \sem{\nex_1}_G \cup \sem{\nex_2}_G \\
\sem{\nex^*}_G & = & \sem{\varepsilon}_G \cup \sem{\nex}_G \cup
\sem{\nex\cdot \nex}_G \cup \sem{\nex\cdot \nex\cdot \nex}_G \cup
\cdots \\ \sem{\, [\nex]\, }_G & = & \{(u,u)\mid \text{there exists
}v\text{ s.t.\ }(u,v)\in \sem{\nex}_G\}. 
\end{eqnarray*}
Here, the symbol $\circ$ denotes the usual composition of binary
relations, that is, $\sem{\nex_1}_G\circ \sem{\nex_2}_G=\{(u,v)\mid$
there exists $w$ s.t.\ $(u,w)\in \sem{\nex_1}_G$ and $(w,v)\in
\sem{\nex_2}_G\}$.


\begin{example}\label{ex:nest}
Let $G_1$ be the graph database in Figure~\ref{fig:main}.
The following is a simple NRE that matches all pairs $(x,y)$
such that $x$ is an author that published a paper in conference $y$:
\[
n_1\; = \; \vv{creator}^- \cdot \vv{partOf} \cdot \vv{series}
\]
For example the pairs $(\vv{:Jeffrey\_D.\_Ullman},\vv{conf:focs})$ and
$(\vv{:Ronald\_Fagin},\vv{conf:pods})$ are in $\sem{n_1}_G$.
Consider now the following expression that 
matches pairs $(x,y)$ such that $x$ and $y$
are connected by a \emph{coautorship sequence}:
\[
n_2\;=\; (\vv{creator}^- \cdot \vv{creator})^+
\]
For example the pair $(\vv{:John\_E.\_Hopkroft},\vv{:Pierre\_Wolper})$, is
in $\sem{n_2}_G$. 
Finally the following expression matches all pairs $(x,y)$ such that $x$ and $y$ are
connected by a coautorship sequence that only considers conference papers:
\[
n_3\;=\; (\vv{creator}^- \cdot [\, \vv{partOf} \cdot \vv{series} ] \cdot \vv{creator})^+
\]
Let us give the intuition of the evaluation of this expression. Assume that
we start at node $u$. The (inverse) edge $\vv{creator}^-$ makes us to navigate from $u$ to
a paper $v$ created by $u$. Then the existential test $[\, \vv{partOf} \cdot \vv{series} ]$
is used to check that from $v$ we can navigate to a conference (and thus, $v$ is a conference paper).
Finally, we follow edge $\vv{creator}$ from $v$ to an author $w$ of $v$.
The $(\cdot)^+$ over the expression allows us to repeat this sequence several times.
For instance, $(\vv{:John\_E.\_Hopkroft},\vv{:Moshe\_Y.\_Vardi})$ is in $\sem{n_3}_G$, but
$(\vv{:John\_E.\_Hopkroft},\vv{:Pierre\_Wolper})$ is not in $\sem{n_3}_G$.
\end{example}

\smallskip \noindent {\bf {\em Complexity and expressiveness of NREs}} \,
The following result, proved in~\cite{PAG10}, shows a remarkable
property of NREs. It states that the query
evaluation problem for NREs is not only polynomial in \emph{combined}
complexity (i.e. when both the database and the query are given as
input), but also that it can be solved linearly in
both the size of the database and the expression. 
Given a graph database $G$ and an NRE
$\nexpr$, we use $|G|$ to denote the size of $G$ (in terms of the
number of egdes $(u,a,v)\in G$), and $|\nexpr|$ to denote the size of
$\nexpr$.  

\begin{sloppypar}
\begin{proposition}[from
\cite{PAG10}]\label{prop:poly} Checking, given a graph database $G$, a
pair of nodes $(u,v)$, and an NRE $\nexpr$, whether $(u,v)\in
\sem{\nexpr}_G$, can be done in time $O(|G|\cdot|\nexpr|)$.
\end{proposition}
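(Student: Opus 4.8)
The plan is to reduce the evaluation of an NRE to a sequence of reachability computations over products of the graph database with small automata, processing the nesting tests from the innermost outward. First I would compile, for every subexpression, a nondeterministic automaton with $\varepsilon$-transitions in the style of Thompson's construction, but treating the constructs atomically as follows: an ordinary letter $a$ becomes a forward transition, an inverse letter $a^-$ becomes a backward transition, and a nesting test $[\nex]$ becomes a single \emph{test transition}, i.e.\ a self-loop whose firing is conditioned on the source node lying in a set $T_{[\nex]} \subseteq V$ still to be determined. Crucially, when compiling a test $[\nex]$ I treat the subtests occurring directly inside $\nex$ as opaque test transitions as well. In this way each syntactic symbol of $\nexpr$ contributes a constant number of states to exactly one automaton, so the automata $\mathcal{A}_{\nex}$ obtained for the main expression and for all nesting tests are pairwise disjoint and satisfy $\sum_{\nex} |\mathcal{A}_{\nex}| = O(|\nexpr|)$, the sum ranging over the top-level expression and every nesting test $\nex$.

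Next I would compute the test sets bottom-up by nesting depth. Consider a test $[\nex]$ all of whose direct subtests have already been resolved, so that every test transition of $\mathcal{A}_{\nex}$ is conditioned on an already-known node set. Form the product $\mathcal{A}_{\nex} \times G$, whose vertices are pairs $(q,u)$ with $q$ a state and $u \in V$: a forward transition $q \to q'$ on $a$ together with an edge $(u,a,v) \in G$ yields an edge $(q,u) \to (q',v)$; an inverse transition uses $(v,a,u)\in G$ to yield $(q,u)\to(q',v)$; and an $\varepsilon$-transition or a satisfied test transition yields $(q,u)\to(q',u)$. By the definition of the semantics, $u \in T_{[\nex]}$ iff in this product the vertex $(q_0,u)$ reaches some vertex $(q_f,w)$ with $q_f$ final. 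I would compute $T_{[\nex]}$ for all $u$ simultaneously by a single reverse reachability from the set of final vertices, which is linear in the size of the product, namely $O(|\mathcal{A}_{\nex}|\cdot|G|)$. Finally, the same product-and-reachability step applied to the top-level automaton decides whether $(q_0,u)$ reaches $(q_f,v)$, i.e.\ whether $(u,v)\in\sem{\nexpr}_G$.

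For the running time I would argue that, because the automata are pairwise disjoint, the total cost is $\sum_{\nex} O(|\mathcal{A}_{\nex}|\cdot|G|) = O(|G|)\cdot\sum_{\nex}|\mathcal{A}_{\nex}| = O(|G|\cdot|\nexpr|)$; the $\varepsilon$- and test-edges add only $O(|\mathcal{A}_{\nex}|\cdot|V|)$ per product, which stays within this bound since nodes not appearing in any edge are irrelevant to the query and can be discarded, so $|V| = O(|G|)$. The main obstacle is exactly this bookkeeping: a naive recursive evaluation would re-expand each nested test at every node and re-traverse shared subexpressions, yielding a super-linear bound. Two points make the argument go through: (i) collapsing an already-processed test into a single precomputed node set (a conditional self-loop), which turns each level into an ordinary two-way reachability problem; and (ii) keeping the per-level automata disjoint, so that their sizes sum to $O(|\nexpr|)$ rather than multiplying along the nesting hierarchy. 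A minor but essential point is that a test must be evaluated at all nodes at once, which is why each level is handled by one global reachability over its product rather than by a pointwise membership check.
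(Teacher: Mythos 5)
Your proof is correct, and it is essentially the argument behind the cited result: the paper itself gives no proof of Proposition~\ref{prop:poly} (it is imported from \cite{PAG10}), and the algorithm there likewise resolves the nesting tests bottom-up, collapsing each already-processed test $[\nexpr]$ into a marking of the nodes where it holds (realized in \cite{PAG10} as fresh self-loop labels rather than your conditional test transitions), and then evaluates each level by reachability in the product of $G$ with a linear-size automaton. Your size accounting—disjoint per-level automata whose sizes sum to $O(|\nexpr|)$, giving total time $\sum O(|\mathcal{A}|\cdot|G|) = O(|G|\cdot|\nexpr|)$—is exactly the point that makes the bound linear rather than multiplicative in the nesting depth, so no gap there.
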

\end{sloppypar} 

On the expressiveness side, NREs
subsume several important query languages for graph databases. For
instance, by disallowing the inverse operator $a^{-}$ and the nesting
operator $[\,\cdot\,]$ we obtain the class of {\em regular path
queries} (RPQs) \cite{CMW87,MW95}, while by only disallowing the
nesting operator $[\,\cdot\,]$ we obtain the class of RPQs with {\em
inverse} or 2RPQs \cite{CGLV00}. (In particular, both expressions
$n_1$ and $n_2$ in Example \ref{ex:nest} are 2RPQs). In turn, 
NREs allow for an important increase in expressive power over
those languages. For example, it can be shown that NRE expression
$n_3$ in Example \ref{ex:nest} cannot be expressed without 
the nesting operator $[\,\cdot\,]$, and hence it is not expressible in
the language of 2RPQs (c.f. \cite{PAG10}).

On the other hand, the class of NREs fails capturing more expressive
languages for graph-structured data that combine navigational
properties with quantification over node ids. Some of the most
paradigmatic examples of such languages are the classes of {\em
conjunctive} RPQs and 2RPQs, that close RPQs and 2RPQs, respectively,
under conjunctions and existential quantification. Both classes of
queries have been studied in depth, as they allow
identifying complex patterns over graph-structured data
\cite{CM90,FLS98,CGLV00a}.   



\section{Containment of NREs over paths}
\label{sec-words}
\medskip
\noindent 
{\bf Problem Definition}.
It is convenient for the proof to explain first how NRE's are used to represent regular 
languages over $\Sigma$, and how to represent these languages 
using alternating two way automata. 
Let us begin with some notation. 

Along the proof we assume that $\Sigma$ includes all {\em reverse symbols}. 
More precisely, if $\Sigma'$ is an alphabet, we work instead with the alphabet 
$\Sigma = \Sigma' \cup \{a^- \mid a \in \Sigma'\}$. 
Let $G = (V,E)$ be a graph over $\Sigma$. A semipath in $G$ is a sequence 
$u_1,a_1,u_2,a_2,\dots, u_m,a_m,u_{m+1}$,, where each $u_i$ belongs to $V$, each 
$a_i$ belongs to $\Sigma$, and 
for each $u_i,a_i,u_{i+1}$, we have that $(u_i,a_i,u_{i+1})$ belongs to $E$, 
if $a_i$ is not a reverse symbol, and $(u_{i+1},a_i,u_{i})$ belongs to $E$ is 
$a_i$ is a reverse symbol, i.e., of form $a^-$ for some $a \in \Sigma'$. 
A semipath is simple if all of its nodes are distinct. 
Finally, a graph $G$ {\em resembles a (simple) semipath} if there is a (simple) semipath $\pi$ in $G$ of the form above such that 
the nodes of $G$ are precisely $\{u_1,\dots,u_n\}$ and the edges of edges of $G$ are precisely 
those that witness the above definition. 

As we have mentioned, we study {\sc NREContainment} only when the input graphs are semipaths. 
We are now ready to describe our goal 
which is to show that the following problem is in \pspace: 
Given NREs $Q_1$ and $Q_2$ over $\Sigma$, decide wether $\sem{Q_1}_G \subseteq \sem{Q_2}_G$, for all 
graphs $G$ over $\Sigma$ such that $G$ resembles a simple semipath. 
In what follows, we refer to this problem as 
{\sc SP-NREContainment}.

\subsection{Alternating 2-way finite automata}

Following \cite{LLS84}, an {\em Alternating 2-way finite automaton}, or A2FA for short, is a tuple $\A = (Q,q_0,U,F,\Sigma,\delta)$, where 
$Q$ is the set of states, $U \subseteq Q$ is a set of {\em universal} states, $q_0$ is the initial state, $F \subseteq Q$ is the 
set of final states, $\Sigma$ is the input alphabet (we also use symbols $\com$ and $\fin$ not in $\Sigma$ 
as the start and end markers of the string), and the transition function is $\delta: Q \times (\Sigma \cup \{\com, \fin\}) \rightarrow 2^{Q \times \{-1,0,1\}}$. 

The numbers $-1,0,1$ in the transition stand for moving back, staying and moving forward, respectively. 
The input is delimited with $\com$ at the beginning and $\fin$ at the end. For convenience, we assume that the automaton starts 
in state $q_0$ while reading the symbol $\fin$ of the string. 

\medskip
\noindent
{\bf Semantics} 
Semantics are given in terms of computation trees over instantaneous descriptions. An {\em instantaneous description} (ID) 
is a triple of form $(q,w,i)$, where $q$ is a state, $w$ is a word in $\com \sigma^* (\epsilon \|\ \fin)$ and $1 \leq i \leq |w|+1)$. 
Intuitively, it represent the state of the current computation, the string it has already read, and the current position of the automata. 
An ID is {\em universal} if $q \in U$ and {\em existential} otherwise, and accepting IDs are of form $(q,w,|w+1|)$ for 
$w \in \com \Sigma^*\fin$ and $q \in F$. 

Let $w = a_1,\dots,a_n$, for each $a_i \in \Sigma \cup \{\com,\fin\}$. The transition relation $\Rightarrow$ is defined as follows: 
\begin{itemize}
\item $(q,w,i) \Rightarrow (p,w,i)$, if $(p,0) \in \delta(q,a_i)$ and $1 \leq i \leq n$; 
\item $(q,w,i) \Rightarrow (p,w,i+1)$, if $(p,1) \in \delta(q,a_i)$ and $1 \leq i \leq n$; and 
\item $(q,w,i) \Rightarrow (p,w,i-1)$, if $(p,-1) \in \delta(q,a_i)$ and $1 < i \leq n$.  
\end{itemize}

A {\em computation tree} $\Pi$ of an A2FA $\A = (Q,q_0,U,F,\Sigma,\delta)$ is a finite, nonempty tree with each of its nodes $\pi$ labelled 
with an ID $\l(\pi)$, and such that 
\begin{enumerate}
\item If $\pi$ is a non-leaf node and $\l(\pi)$ is universal, let $I_1,\dots,I_k$ be all IDs such that $\l(\pi) \Rightarrow I_j$ for each 
$1 \leq j \leq k$. Then $\pi$ has exactly $k$ children $\pi_1,\dots,\pi_k$, where $\l(\pi_j) = I_j$; and 
\item If $\pi$ is a non leaf node and $\l(\pi)$ is existential, then $\pi$ has exactly one child $\pi'$ such that $\l(\pi) \Rightarrow \l(\pi')$.  
\end{enumerate}

Finally, an {\em accepting computation tree} of $\A$ over $w$ is a computation tree $\Pi$ whose root is labelled with $(q_0,\com w \fin, |w| + 2)$ and 
each of it leaves are labelled with an accepting ID. 

We need the following theorem. It follows immediately from the results in \cite{LLS84}:

\begin{proposition}
\label{prop-nonempt-a2fa}
Given a A2FA $\A$, it is \pspace-complete to decide wether the language of $\A$ is empty. 
\end{proposition}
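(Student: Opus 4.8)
The plan is to prove the two inclusions of \pspace-completeness separately, with the lower bound being routine and the upper bound carrying all the weight.

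\textbf{Hardness.} I would first note that A2FA subsume one-way alternating finite automata (1AFA): an A2FA that never uses the move $-1$ is exactly a 1AFA, so it suffices to show that emptiness is already \pspace-hard for 1AFA. For this I would reduce from the intersection-nonemptiness problem for deterministic automata — given DFAs $A_1,\dots,A_k$ over $\Sigma$, is $\bigcap_i L(A_i)\neq\emptyset$? — which is \pspace-complete. Given such DFAs, build a 1AFA whose initial state is universal and branches into the $k$ start states, thereafter simulating each $A_i$ deterministically from left to right, accepting $w$ iff every branch accepts. Then the language of the 1AFA is exactly $\bigcap_i L(A_i)$, so its (non)emptiness coincides with that of the intersection. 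Since $\Sigma$ and the transition function can be assembled in logarithmic space and A2FA trivially simulate 1AFA, emptiness of A2FA is \pspace-hard.

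\textbf{Membership.} The idea is to guess a hypothetical input word letter by letter while maintaining only a bounded-size \emph{summary} of the automaton's behaviour, so that emptiness becomes reachability in a graph of polynomially-described vertices. For a one-way alternating automaton the right summary of a suffix is the set of states that accept it, obtained by the backward subset recurrence: starting from $F$, each letter transforms a subset $S\subseteq Q$ into the set of states $q$ such that either $q$ is existential and some successor under that letter lies in $S$, or $q$ is universal and all successors lie in $S$. The language is nonempty iff a subset containing $q_0$ is reachable from $F$ under these letter-transitions. Guessing the letters on the fly and storing only the current subset (which needs $|Q|$ bits) uses polynomial space, and Savitch's theorem ($\npspace=\pspace$) then closes the one-way case.

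\textbf{The two-way feature is the crux, and it is exactly what \cite{LLS84} supplies.} The obstacle is that a two-way head may re-enter the already-scanned prefix and leave again, so the naive summary must record the full crossing behaviour at the boundary; under alternation this behaviour is a monotone Boolean condition over exit states, and there are doubly-exponentially many such conditions, which would break the polynomial-space budget of the subset recurrence above. I would therefore invoke the simulation of \cite{LLS84}, by which a two-way alternating finite automaton on finite inputs is converted into an equivalent one-way alternating finite automaton with only a polynomial increase in the number of states: reading left to right, the one-way automaton uses universal branching to spawn and verify the hypothetical return computations implied by each crossing, thereby replacing unbounded two-way motion by bounded alternation. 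Composing this conversion with the backward-subset reachability algorithm yields the \pspace\ upper bound, and together with the hardness it gives \pspace-completeness. The entire argument rests on this polynomial-state two-way-to-one-way simulation; once it is granted, everything else is the routine subset recurrence plus Savitch.
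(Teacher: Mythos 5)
Your hardness reduction (Kozen's DFA intersection non-emptiness, simulated by a universal initial branch) and your one-way upper bound (backward subset reachability plus Savitch) are both sound. But the step that, by your own admission, carries the entire two-way case is not: there is no polynomial-blowup conversion of A2FA into one-way \emph{alternating} automata --- not in \cite{LLS84}, and not anywhere, because such a conversion provably cannot exist. To see this, note first that a $k$-state one-way AFA for a language $L$ yields a DFA for the reversal $L^R$ with at most $2^k$ states: the set of AFA-states that accept the current suffix evolves deterministically as letters are prepended (0-moves are absorbed by a least fixpoint inside each position), so any one-way AFA for $L$ has at least $\log_2$ of the minimal DFA size of $L^R$ many states. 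Second, a $k$-state one-way AFA for $L$ yields a $(k+O(1))$-state A2FA for $L^R$: walk to the far endmarker and simulate the AFA with every move reversed. Now take $M_n = \Sigma^*\, 1\, \Sigma^{n-1}$ (``the $n$-th letter from the right is $1$''). Its reversal has a DFA with $n+O(1)$ states (a counter plus two sinks), and by the Brzozowski--Leiss/Chandra--Kozen--Stockmeyer correspondence one can run that counter DFA backwards on binary-encoded, dual-railed state bits, giving a one-way AFA for $M_n$ with $O(\log^2 n)$ states in the paper's existential/universal model (0-moves implement the formula gates). By the second fact, $M_n^R$ then has an A2FA with $O(\log^2 n)$ states; by the first fact, any one-way AFA for $M_n^R$ needs at least $n$ states, because the minimal DFA for $(M_n^R)^R = M_n$ has $2^n$ states. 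So the A2FA-to-1AFA trade-off is $2^{\Omega(\sqrt{k})}$, not polynomial. Nor can you retreat to an exponential conversion: emptiness of a one-way AFA is polynomial space \emph{in its number of states}, so an exponential alternating target only yields an \expspace\ bound.

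The repair is to aim the unavoidable exponential at a \emph{nondeterministic} target instead. An $n$-state A2FA converts to a one-way NFA with $2^{O(n^2)}$ states whose states are Shepherdson-style crossing summaries of a single (memoryless) run tree: at each boundary, a partial map from $Q$ to subsets of $Q$ recording, for each state that crosses leftward, the set of states in which that excursion re-crosses rightward --- only singly exponentially many objects, not the doubly exponential family of all monotone behaviours that you rightly feared --- and the NFA's transition relation between such summaries is checkable in polynomial space. Emptiness of that NFA is plain reachability, performed on the fly in nondeterministic polynomial space, and Savitch closes the argument exactly as in your one-way case. Note that the paper itself offers no argument at all here (the proposition is dispatched with a bare appeal to \cite{LLS84}), so what is being judged is your reconstruction of what that citation must supply; the reconstruction puts the exponential loss in the one place it cannot go, and that is a genuine, fatal gap rather than a misattribution.
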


\subsection{Proof of {\sc SP-NREContainment}}

The idea is to code acceptance of strings by NREs using alternating 2-way automata. More precisely, given an 
NRE $\exp$, we construct an A2FA $\A_\exp$ such that the language of $\A_\exp$ corresponds, in a precise sense, 
to all those words $w$ such that $\sem{\exp}_G$ is nonempty for all those graphs $G$ that resemble the simple semipath 
$w$. 

\medskip
\noindent
{\bf Construction of $\A_\exp$}. 
We define the translation by induction, all states are existential unless otherwise noted.  
Along the construction, we shall be {\em marking}, in each step, a particular state of the automata. 
We use this mark in the construction.  Furthermore, for the sake of readability we shall include 
$\epsilon$-transitions. This is without loss of generality, as they can be easily simulated with two transitions (and an extra state), 
the first moving forward, and the second backwards. 

\begin{itemize}
\item If $\exp = a$, then $A_\exp = (\{q_0,q_f,q_r\}, \emptyset, q_0, q_f, \Sigma,\delta)$, with 
$\delta$ defined as: 
\[
\begin{array}{lcl}
\delta(q_0,a) & = & \{(q_f,1), (q_r,-1) \} \\
\delta(q_0,b) & = & \{(q_r,-1)\}\, , \text{for each $b \in \Sigma$, $b \neq a$}  \\
\delta(q_r,a^-) & = & \{(q_f,0), \} \\ 
\end{array}
\]

State $q_r$ and the two way functionality is added so that the automaton correctly accepts 
when the input is a word of form $\Sigma^* a^- \Sigma^*$ (See \cite{CGLV00} for a thorough explanation of this machinery). 
Moreover, state $q_f$ is {\em marked}. 

\item Similarly, if $\exp = a^-$, then $A_\exp = (\{q_0,q_f,q_r\}, \emptyset, q_0, q_f, \Sigma,\delta)$, with 
$\delta$ defined as: 
\[
\begin{array}{lcl}
\delta(q_0,a^-) & = & \{(q_f,1), (q_r,-1) \} \\
\delta(q_0,b) & = & \{(q_r,-1)\}\, , \text{for each $b \in \Sigma$, $b \neq a^-$}  \\
\delta(q_r,a) & = & \{(q_f,0), \} \\ 
\end{array}
\]

State $q_f$ is {\em marked}.

\item Case when $\exp = \exp_1 + \exp_2$. Let $A_{\exp_i} = (Q^i, U^i, q_0^i, F^i, \Sigma, \delta^i)$, for $i = 1,2$, 
and assume that $q_m^i$ is the marked state from $A_{\exp_i}$. 
Define $A_\exp = (Q, U, q_0, F, \Sigma, \delta)$, where $Q = \{q_0,q_f\} \cup Q^1 \cup Q^2$, $U = U^1 \cup U^2$, 
$F = \{q_f\} \cup (F^1 \setminus \{q_m^1\}) \cup (F^2 \setminus \{q_m^2\})$ and $\delta = \delta^1 \cup \delta^2$, plus transitions 
\[
\begin{array}{lcl}
\delta(q_0, \epsilon) & = & \{(q_0^1,0),(q_0^2,0)\} \\
\delta(q_m^1,\epsilon) & = & \{(q_f,0)\} \\
\delta(q_m^2,\epsilon) & = & \{(q_f,0)\} \\
\end{array}
\]

For each $i = 1,2$, remove al marks from $A_{\exp_i}$, and 
{\em mark} state $q_f$.

\item In the case that $\exp = \exp_1 \cdot \exp_2$, let $A_{\exp_i} = (Q^i, U^i, q_0^i, F^i, \Sigma, \delta^i)$, for $i = 1,2$, 
and assume that $q_m^i$ is the marked state from $A_{\exp_i}$. 
For each $i = 1,2$, remove all marks from $A_{\exp_i}$. 
Define $A_\exp = (Q, U, q_0, F, \Sigma,  \delta)$, where $Q = \{q_0,q_f\} \cup Q^1 \cup Q^2$, $U = U^1 \cup U^2$, 
$F = \{q_f\} \cup (F^1 \setminus \{q_m^1\}) \cup (F^2 \setminus \{q_m^2\})$ and $\delta = \delta^1 \cup \delta^2$, plus transitions 

\[
\begin{array}{lcl}
\delta(q_0, \epsilon) & = & \{(q_0^1,0)\} \\
\delta(q_m^1,\epsilon) & = & \{(q_0^2,0)\} \\
\delta(q_m^2,\epsilon) & = & \{(q_f,0)\} \\
\end{array}
\]

For each $i = 1,2$, remove al marks from $A_{\exp_i}$, and 
{\em mark} state $q_f$.

\item For $\exp = \exp_1^*$, let $A_{\exp_1} = (Q^1,U^1,q_0^1, F^1,\Sigma, \delta^1)$, 
and assume that $q_m^1$ is the marked state from $A_{\exp_1}$. 

Define $A_\exp = (Q, U^1, q_0, F, \Sigma, \delta)$, where $Q = \{q_0,q_f\} \cup Q^1$, 
$F = \{q_f\} \cup (F^1 \setminus \{q_m^1\})$ and $\delta = \delta^1$ plus transitions 
\[
\begin{array}{lcl}
\delta(q_0, \epsilon) & = & \{(q_0^1,0)\} \\
\delta(q_0^1, \epsilon) & = & \{(q_f,0)\} \\
\delta(q_m^1,\epsilon) & = & \{(q_f,0), (q_0^1,0)\} \\
\end{array}
\]

Remove al marks from $A_{\exp_1}$, and 
{\em mark} state $q_f$.

\item When $\exp = [\exp_1]$, let $A_{\exp_1} = (Q^1,U^1, q_0^1, F^1, \Sigma, \delta^1)$, 
and assume that $q_m^1$ is the marked state from $A_{\exp_1}$. 
Then 
$A_\exp = ( Q, U^1, q_0, F, \Sigma, \delta)$, where $Q = \{q_0,p,q_2,q_f\} \cup Q^1$, $U = U^1 \cup \{p\}$,
$F = \{q_f\} \cup F^1$ and $\delta = \delta^1$, plus transitions 
\[
\begin{array}{lcl}
\delta(q_0, \epsilon) & = & \{(p,0)\} \\
\delta(p,\epsilon) & = & \{(q_f,0), (q_i^1,0)\} \text{ (recall that } p \text{ is a universal state)} \\
\delta(q_m^1,a) & = & \{(q_m^1,1)\} \text{ for each }a \in \Sigma \\
\end{array}
\]

Remove al marks from $A_{\exp_1}$, and 
{\em mark} state $q_f$.
\end{itemize}

Let $\A_\exp = (Q,q_0,U,F,\Sigma,\delta)$ be as constructed by this algorithm. To finish our construction 
we need to allow $\A_\exp$ to (non deterministically) move backwards from the end of the word, until it reaches a suitable 
starting point for the computation, and allow every final state to reach the end of the word in its computation. 
Formally, we define $\A_\exp' = (Q \cup \{q_0'\},q_0',U,F,\Sigma \cup \{\fin\},\delta')$, where 
$\delta'$ contains all transitions in $\delta$ plus transitions 
$\delta(q_0',a) = \{(q_0,0),(q_0',-1)\}$ for each $a \in \Sigma \cup \{\fin\}$ and $\delta(q_f,a) = (q_f,1)$ for 
each $a \in \Sigma$ and $q_f \in F$.

Notice that the above construction can be computed in polynomial 
time with respect to $\exp$. Furthermore, 
let $q_m$ be the marked (final) state of $A_\exp'$. From its construction, 
it is clear that every accepting computation tree $\Pi$ of $A_\exp$ on input 
$w$ will have the following form: 
(1) For some $1 \leq i \leq |w|$ there is a single path from the root to a node $\pi_s$ such that $\l(\pi) = (q_0,w,i)$ and no ancestor 
of $\pi$ is labelled with an ID using a state different from $q_0'$; and (2) there is 
some $1 \leq j \leq |w|$ such that $\pi_f, \pi_f',\pi_f'',\dots $ is the maximal path of nodes labelled with 
$(q_m,w,j)$, $(q_m,w,j+1),\dots,(q_m,w,|w|+1)$, i.e., the father of $(q_m,w,j)$ is not labelled with an ID using state 
$q_m$. 
Property (1) represents the automaton searching for its starting point, and (2) represents the 
end of the computation of the part of $\A_\exp'$ that is representing the non-nesting part of $\exp$. 
We denote such nodes $\pi_s$ and $\pi_f$ as the {\em tacit start} and {\em tacit ending} of $\Pi$.  
With this definitions we can show the following.

\begin{lemma}
\label{lem-1-correctstring}
Let $S$ be a graph over $\Sigma$ that is a semipath, $w$ the label of the path $S$, 
and $\exp$ a NRE. Then a pair $(u_i,u_j)$ belongs to $\sem{\exp}_S$ if and only if 
there is an accepting computation tree of $A_\exp'$ on input $w$ whose tacit start is 
labelled with $(q_0,w,i)$ and whose tacit ending is labelled with  $(q_m,w,j)$. 
\end{lemma}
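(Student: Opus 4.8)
The plan is to prove the statement by structural induction on $\exp$, but to make the induction go through I would first strengthen it into a claim about the \emph{unprimed} automaton $A_\exp$, and only afterwards read off the statement for $A_\exp'$ from the description of accepting computation trees given just before the lemma. For $A_\exp$ with initial state $q_0$ and marked state $q_m$, call a finite tree of IDs a \emph{partial run} from $(q_0,w,i)$ to $(q_m,w,j)$ if its root is $(q_0,w,i)$, every universal node carries all of its $\Rightarrow$-successors while every internal existential node carries exactly one, it has a single distinguished leaf labelled $(q_m,w,j)$, and every other leaf is an accepting ID (a final state read at the right end of $w$). The strengthened invariant to be proven is: $(u_i,u_j)\in\sem{\exp}_S$ if and only if $A_\exp$ admits a partial run from $(q_0,w,i)$ to $(q_m,w,j)$. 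The distinguished $q_m$-leaf models the exit of the ``main thread'' that concatenation, union and star re-wire, whereas the extra universal branches introduced by nesting are forced to die in accepting leaves.

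For the base case $\exp=a$ I would observe that, since $S$ resembles a semipath with label $w$, the edge $(u_i,a,u_j)$ exists in exactly two situations: a forward edge with $w[i]=a$ and $j=i+1$, or a reverse edge coming from a symbol $a^-$ with $w[i-1]=a^-$ and $j=i-1$. Inspecting the gadget shows that $A_a$ reaches $q_f=q_m$ at $i+1$ exactly via the transition $(q_f,1)$ in the first case, and at $i-1$ exactly via $(q_r,-1)$ followed by $\delta(q_r,a^-)=(q_f,0)$ in the second, so its partial runs match $\sem{a}_S$; the case $\exp=a^-$ is symmetric. The three ``regular'' inductive cases follow the corresponding semantic clauses. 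For $\exp_1+\exp_2$ a partial run is exactly a partial run of $A_{\exp_1}$ or of $A_{\exp_2}$ wrapped by the new $\epsilon$-moves $q_0\to q_0^i$ and $q_m^i\to q_f$, so the invariant follows from the induction hypothesis and $\sem{\exp_1+\exp_2}_S=\sem{\exp_1}_S\cup\sem{\exp_2}_S$. For $\exp_1\cdot\exp_2$ a partial run factors as a partial run of $A_{\exp_1}$ from $i$ to its marked state at some position $k$, the $\epsilon$-move $q_m^1\to q_0^2$, and a partial run of $A_{\exp_2}$ from $k$ to its marked state at $j$; the intermediate $k$ is precisely the witness in $\sem{\exp_1}_S\circ\sem{\exp_2}_S$. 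For $\exp_1^*$ the loop transitions $\delta(q_m^1,\epsilon)=\{(q_f,0),(q_0^1,0)\}$ let the main thread chain any number of partial runs of $A_{\exp_1}$, while the shortcut $\delta(q_0^1,\epsilon)=\{(q_f,0)\}$ realises the reflexive ($\varepsilon$, i.e.\ $j=i$) case, matching the union over all powers of $\sem{\exp_1}_S$.

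The case I expect to be the real obstacle is nesting, $\exp=[\exp_1]$, since it is the only construct that introduces a genuinely universal, hence branching, node. From $q_0$ the computation reaches the universal state $p$, whose two $\epsilon$-successors force \emph{both} a branch that moves to the marked state $q_f$ at the same position $i$ (so necessarily $j=i$, consistent with $\sem{[\exp_1]}_S$ containing only pairs of the form $(u,u)$) and a branch that enters $A_{\exp_1}$ at its initial state at position $i$. For the tree to be a valid partial run the latter branch must terminate in accepting leaves only; by the induction hypothesis it can do so exactly when $A_{\exp_1}$ reaches its marked state at some position $k$, i.e.\ when $(u_i,u_k)\in\sem{\exp_1}_S$ for some $k$, after which the added forward transitions $\delta(q_m^1,a)=(q_m^1,1)$ drive it to the end of $w$, where $q_m^1\in F$ makes the leaf accepting. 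Hence the universal branch closes off precisely when the existential test succeeds at $u_i$, i.e.\ when there is $v$ with $(u_i,v)\in\sem{\exp_1}_S$. The delicate bookkeeping to verify here is that in every partial run the distinguished main-thread leaf (through $q_f$) and the nested accepting leaves (through $A_{\exp_1}$) stay cleanly separated — this is exactly what the marking discipline, together with the removal of inner marks before each linking, is designed to guarantee.

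Finally I would transfer the invariant to $A_\exp'$. Its backward-searching prefix through $q_0'$ contributes the single initial path that selects the tacit start $(q_0,w,i)$, and the globally added transitions $\delta(q_f,a)=(q_f,1)$ on final states let the tacit ending $(q_m,w,j)$ walk forward to an accepting leaf at the right end of $w$. Using the characterization of accepting computation trees recalled before the statement, an accepting tree of $A_\exp'$ with tacit start $(q_0,w,i)$ and tacit ending $(q_m,w,j)$ exists if and only if $A_\exp$ has a partial run from $(q_0,w,i)$ to $(q_m,w,j)$, which by the invariant holds if and only if $(u_i,u_j)\in\sem{\exp}_S$. This is precisely the statement of the lemma.
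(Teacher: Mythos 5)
Your proposal is correct and follows essentially the same route as the paper: a structural induction on $\exp$ with the identical case analysis (forward/backward edge gadgets for $a$ and $a^-$, cut-and-glue of sub-runs at the marked states for $+$, $\cdot$, $*$, and the universal branching argument for $[\exp_1]$), followed by the same transfer to the primed automaton via the tacit start/ending characterization. Your ``partial run'' invariant for the unprimed $A_\exp$ is just a cleaner packaging of what the paper does implicitly by surgery on accepting computation trees with tacit starts and endings, and you flag the same delicate point (separation of the main thread from nested accepting branches) that the paper also treats only informally.
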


\begin{proof}
Let $S$ be the semipath $u_1,a_1,u_2,a_2,\dots, u_m,a_m,u_{m+1}$, and therefore 
$w = a_1 \cdot \dots \cdot a_m$, and let $A_\exp' = (\Sigma, Q, U, q'_0, \delta', F)$ constructed as explained above. 

\smallskip

For the {\bf only if direction}, assume that $\sem{\exp}_S$ contains the pair $(u_i,u_j)$, $1 \leq i,j \leq m+1$. 
We prove the above statement by induction on $\exp$. 
\begin{itemize}
\item If $\exp = a$, for some $a \in \Sigma$, and $(u_i,u_j) \in \sem{\exp}_S$, then either $j = i+1$ and the edge 
$(u_i,a,u_j)$ is in $S$, or $j = i-1$ and the edge $(u_j,a^-,u_i)$ is in $S$. In the former case the existence of a computation tree 
is obvious, for the latter case observe that one could use the transitions $(q_0,w,i) \Rightarrow (q_r,w,i-1)$, and then since 
$(u_j,a^-,u_i)$ is in $S$ we follow transition $(q_r,w,i-1) \Rightarrow (q_f,w,i-1)$. 

\item Case for $\exp = a^-$ is analogous to the previous one 

\item If $\exp = \exp_1 + \exp_2$ and $(u_i,u_j) \in \sem{\exp}_S$, then $(u_i,u_j) \in \sem{\exp_k}_S$ for $k = 1$ or $k = 2$, 
which entails a proper accepting computation tree for $A_{\exp_1}$ ($A_{\exp_2}$) on input $w$. The statement follows immediately 
from the construction of $A_\exp$. 

\item If $\exp = \exp_1 \cdot \exp_2$ and $(u_i,u_j) \in \sem{\exp}_S$, then there is a node $u_k$ of $S$ such that 
$(u_i,u_k) \in \sem{\exp_1}_S$ and $(u_k,u_j) \in \sem{\exp_2}_S$. Assume that the initial and marked nodes of 
$A_{\exp_1}$ and $A_{\exp_2}$ are $q_0^1$, $q_m^1$ and $q_0^2$, $q_m^2$, respectively. From the induction hypothesis we have that there 
are accepting computation trees for $A_{\exp_1}$ and $A_{\exp_2}$ whose tacit starts are $(q_0^1,w,i)$ and $(q_0^2,w,k)$, 
respectively, and the tacit ending of the first tree is labelled with $(q_m^1,w,k)$. Since $A_\exp'$ has, by construction, 
the pair $(q_0^2,0)$ in $\delta(q_m^2,\epsilon)$, we can cut the first tree in its tacit ending and 
plug in the computation tree for $A_{\exp_2}$, starting from its tacit start, which proves the statement. 

\item The case when $\exp = \exp_1^*$ goes along the same lines as the concatenation, except this time we may
have to plug in a greater number of computation trees. 

\item Finally, if $\exp = [\exp_1]$ and $(u_i,u_j) \in \sem{\exp}_S$, then $u_i = u_j$, and there is some $u_k$ such that 
$(u_i,u_k) \in \sem{\exp_1}_S$. Let $q_p$ be the universal state in $A_\exp'$ that is not in $A_{\exp_1}$. Then the only transitions associated to 
$q_p$ are $\delta(q_p,\epsilon) = \{(q_0^1,0), (q_f,0)\}$, with $q_f$ being the only marked (final) state of $A_\exp$. 
Our accepting computation tree for $\A_\exp'$ has a path from the root to the tacit start, then a node labeled $(q_p,w,i)$ with children 
$(q_0^1,w,i)$ and $(q_f,w,i)$, with the computation tree for $A_{\exp_1}$ (starting from its tacit start) plugged into the first of these children. 
\end{itemize}

\medskip 

For the {\bf if direction}, assume that there is an accepting computation tree of 
$A_\exp$ on input $w$ whose tacit start is labelled with 
$(q_0,w,i)$ and with its tacit ending labelled with $(q_m,w,j)$. We now prove that $(u_i,u_j)$ belong 
to $\sem{\exp}_S$. The proof is again by induction 
\begin{itemize}
\item For the base case when $\exp = a$ (proof for $\exp = a^-$ is analogous), there are two options for an accepting computation of $A_\exp$. 
Either it is of form $(q_0,w,i) \Rightarrow (q_f,w,i+1)$, in which case $j = i+1$ and 
$a_i = a$, or it is of form $(q_0,w,i) \Rightarrow (q_r,w,i-1) \Rightarrow (q_f,w,i-1)$, 
in which case $j = i-1$ and $a_i = a^-$. For both cases we obtain that $(u_i,u_j) \in \sem{\exp}_S$. 

\item When $\exp = \exp_1 + \exp_2$, by the construction of $A_\exp$, any computation tree of $A_\exp$ can be prunned from its tacit start to 
obtain a computation tree for one of $A_{\exp_1}$ or $A_{\exp_2}$, from where the statement easily follows. 

\item When $\exp = \exp_1 \cdot \exp_2$, we can similarly obtain computation trees for $A_{\exp_1}$ and $\A_{\exp_2}$, and then 
conclude that $(u_i,u_j)$ belong 
to $\sem{\exp}_S$. Same hold when $\exp = \exp_1^*$, except in this case we obtain multiple computation trees for $\exp_1$. 

\item Finally, if $\exp = [\exp_1]$ and there is an accepting computation tree of 
$A_\exp$ on input $w$ whose tacit start is labelled with 
$(q_0,w,i)$ and with its tacit ending labelled with $(q_m,w,j)$, from the construction of $A_\exp$ the top part of the computation tree is of form 
$(q_0,w,i) \Rightarrow (q_p,w,i) \Rightarrow (q_0^1,w,i) , (q_m,w,i)$, where $q_p$ is the only universal state of $A_\exp$ not in $A_{\exp_1}$, 
and $q_0^1$ is the initial state of $A_{\exp_1}$. Then the part of the computation tree that follows from node  $(q_0^1,w,i)$ comprises 
a computation tree 
for $A_{\exp_1}$, i.e., there is a $u_k$ such that $(u_i,u_k) \in \sem{\exp_1}_S$. This entails that $(u_i,u_i) \in \sem{\exp}_S$.
\end{itemize}
\end{proof}

\medskip
\noindent
{\bf Proof for containment}
For our algorithm of containment, we need to be a little more careful, since for a word $w$ accepted by $A_\exp$ 
it is not necessarily the case that $u_i$ and $u_j$ are the start and finish nodes of the semipath $S$. Thus, we have to 
distinguish the start/end of the word with the actual piece that is framed by nodes $u_i$ and $u_j$ in the semipath. 
In order to do that, we augment $\Sigma$ with two extra symbols $S,E$. Furthermore, if 
$A_\exp = (\Sigma, Q, U, q_0', \delta, F)$, and $q_m$ is the marked state of $A_\exp$, 
we construct $A_\exp^{S,E} = (\Sigma \cup \{S,E\}, Q \cup \{q_0^S, q_f^E\}, U, q_0^S, \delta^{S,E}, (F \setminus \{q_m\})\cup \{q_f^E\})$, 
where $\delta^\$$ is defined as follows: 
for each state $q \in Q \setminus U$, we add the pair $(q,1)$ to $\delta(q,S)$ and $\delta(q,E)$, if $q$ is not $q_0$ or $q_m$, the pair 
$(q_f^E,1)$ to $\delta(q_m,E)$, $(q_0,1)$ to $\delta(q_0^S,S)$, plus the pair $(q_0^S,-1)$ to each $\delta(q_0^S,a)$ for $a \in \Sigma \cup \{E\}$ 
and $(q_f^E,1)$ to each $\delta(q_f^E,a)$ for $a \in \Sigma$. 

The intuition is the following. Let $\exp$ be an NRE and $A_\exp$ be the A2FA constructed as above. Now assume that there is a semipath 
$w = u_1,a_1,u_2,\dots,u_n,a_n,u_{n+1}$ and nodes $u_i,u_j$ such that $(u_i,u_j) \in \sem{\exp}_S$. By the above Lemma, we have that 
there is a computation tree for $A_\exp$ that tacitly starts in $(q_0,w,i)$ and tacitly ends in $(q_m,w,j)$. The idea of the symbols $S$ and 
$E$ is to specifically mark the tacit start and end of the piece $a_i,\dots,a_{j-1}$ labeling the semipath between $u_i$ and $u_j$. Thus, in 
this case, $\A_\exp^{S,E}$ accepts the word $a_1 \cdots a_{i-1} S a_i \cdots a_{j-1} E a_j \cdots a_n$. It uses intuitively the same computation 
tree mentioned before, except now it moves backwards in state $q_0^S$ until symbol $S$ is reached, then proceeds with the computation, and 
the marked branch now ends in $q_m^E$ instead of $Q_m$, after checking there is a symbol $E$ after $a_{j-1}$. With this intuition, it is 
straightforward to show: 

\begin{lemma}
\label{lem-2-correctstring}
Let $w = u_1,a_1,u_2,\dots,u_n,a_n,u_{n+1}$ be a graph over $\Sigma$ that is a simple semipath, $w = a_1,\dots,a_n$ the label of the path $w$, 
and $\exp$ a NRE. Then a pair $(u_i,u_j)$ belongs to $\sem{\exp}_S$ if and only if 
$\A_\exp^{S,E}$ accepts the word $a_1 \cdots a_{i-1} S a_i \cdots a_{j-1} E a_j \cdots a_n$.  
\end{lemma}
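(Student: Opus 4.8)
The plan is to derive the statement directly from Lemma~\ref{lem-1-correctstring}, which already characterizes $(u_i,u_j) \in \sem{\exp}_S$ as the existence of an accepting computation tree of $A_\exp'$ on $w$ whose tacit start is $(q_0,w,i)$ and whose tacit ending is $(q_m,w,j)$. Hence it suffices to prove the bridge claim that $\A_\exp^{S,E}$ accepts
\[
w' \;=\; a_1 \cdots a_{i-1}\, S\, a_i \cdots a_{j-1}\, E\, a_j \cdots a_n
\]
if and only if $A_\exp'$ admits such a tree with tacit start at position $i$ and tacit ending at position $j$. The guiding observation is that $\A_\exp^{S,E}$ is merely a refinement of $A_\exp'$: the unconstrained backward search performed by $q_0'$ is replaced by the search $q_0^S \to q_0$ that is forced to stop at the marker $S$, and the unconstrained forward march of the marked state is replaced by the transition $q_m \to q_f^E$ that fires only upon reading $E$; on every other symbol the two markers are made transparent by the pass-through moves $(q,1) \in \delta(q,S),\delta(q,E)$ added for the non-special states. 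Because the indices $i,j$ fix the positions at which $S$ and $E$ are planted, each marker occurs exactly once and pins down a unique tacit start and ending.

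I would treat the $(\Leftarrow)$ direction by surgery on a tree given by Lemma~\ref{lem-1-correctstring}. The new initial state $q_0^S$, which carries a backward self-loop on every symbol of $\Sigma \cup \{E\}$ and the jump $(q_0,1)$ on $S$, scans leftward from the right end of $w'$ until it meets $S$ and then enters $q_0$ one cell to its right, i.e.\ on the image of position $i$. From there I replay the given computation verbatim, accounting for the uniform $+1$ shift of all positions to the right of $S$ (and a further $+1$ to the right of $E$), and inserting a transparent pass-through step whenever the head would land on a marker. The marked branch, which previously reached $(q_m,w,j)$, now meets $E$ immediately to the left of $a_j$ and is sent by $(q_f^E,1)\in\delta(q_m,E)$ into the accepting sink $q_f^E$, which then reaches the end through its forward self-loop. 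For the $(\Rightarrow)$ direction I reverse the surgery: in any accepting tree of $\A_\exp^{S,E}$ the only exit from $q_0^S$ forces the core computation to begin just after $S$, and the only way to retire the marked branch is to consume $E$, so deleting the marker-handling steps and undoing the shift leaves a legal accepting tree of $A_\exp'$ with tacit start $i$ and tacit ending $j$.

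The step I expect to be most delicate is verifying that the markers are genuinely invisible to the two-way navigation of the core automaton. Since the reverse-symbol base cases and the nesting tests cause the head to move back and forth, I must check that every crossing of $S$ or $E$ is matched by an available pass-through transition and that the position indices agree on both sides of the marker under the shift. In particular I must confirm that the universal states spawned by the nesting operator, which only fire $\epsilon$-moves and never read $S$ or $E$ directly, transfer their subcomputations unchanged, and that no core subcomputation is ever forced to traverse a marker in a direction for which no pass-through move is available. Once this transparency is established, the bridge claim, and hence the lemma, follows from Lemma~\ref{lem-1-correctstring} without a fresh induction on $\exp$.
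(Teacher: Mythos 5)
Your high-level route is exactly the paper's own: the paper also derives Lemma~\ref{lem-2-correctstring} from Lemma~\ref{lem-1-correctstring} plus the claim that $\A_\exp^{S,E}$ simulates $A_\exp'$ with $S$ and $E$ acting as transparent markers (the paper offers only this intuition and declares the lemma straightforward). However, the step you flag as most delicate and then defer is not merely delicate --- it fails for the construction as given, so your $(\Leftarrow)$ surgery cannot be completed. The only marker transitions added are \emph{forward} pass-throughs, $(q,1)\in\delta(q,S)$ and $(q,1)\in\delta(q,E)$; no state can move backward off a cell carrying a marker. Yet a computation tree of $A_\exp'$ witnessing $(u_i,u_j)\in\sem{\exp}_S$ may be forced to move the head leftward across position $i$ (or back leftward across position $j$), because the witnessing path of an NRE may leave the segment between $u_i$ and $u_j$: nesting tests and inverses do exactly this. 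Concretely, take $\exp=[a^-]\cdot b$ and the semipath $S$ with edges $(u_1,a,u_2)$ and $(u_2,b,u_3)$, so that $(u_2,u_3)\in\sem{\exp}_S$ and the marked word is $a\,S\,b\,E$. Every computation tree of $\A_\exp^{S,E}$ must contain the universal nesting branch, which from the tacit start (the cell carrying $b$) enters the inner automaton for $a^-$, is forced backward upon reading $b$, and lands on the cell carrying $S$; from there its only move is the forward pass-through, after which it has no transition on $b$. Hence $\A_\exp^{S,E}$ rejects a word that the lemma says it accepts: transparency cannot ``be established,'' it has to be engineered into the automaton.

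To be fair, this defect is inherited from the paper rather than introduced by you, but a correct proof must repair the construction --- for instance by also adding backward pass-through moves $(q,-1)$ on $S$ and $E$ for the appropriate states --- and then re-verify both directions of your bridge claim; in particular the $(\Rightarrow)$ direction must check that the newly added backward moves do not create accepting computations for marked words whose underlying pair is not in $\sem{\exp}_S$. A secondary point: your argument, like the lemma statement itself, tacitly assumes $i\le j$ (you speak of the marked branch meeting $E$ ``immediately to the left of $a_j$''). With inverses one can have $j<i$, in which case the displayed word $a_1\cdots a_{i-1}\,S\,a_i\cdots a_{j-1}\,E\,a_j\cdots a_n$ is ill-formed ($E$ would have to precede $S$), so the marker placement and the corresponding automaton behavior need to be spelled out separately for that case.
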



We can now state our algorithm for solving {\sc SP-QueryContainment}. 
On input NREs $\exp_1$ and $\exp_2$, we perform the following operations: 
\begin{enumerate}
\item Compute an NFA $\A^{S,E}$ that accepts only those words over 
$(\Sigma \cup \{S,E\})^*$ of form $w_1Sw_2Ew_3$, for each $w_1,w_2,w_3$ in $\Sigma^*$.  
\item Compute $\A_{\exp_1}^{S,E}$ and $A_{\exp_2}^{S,E}$ as explained above. 
\item Compute the A2FA $A^c = (A_{\exp_2}^{S,E})^c$ whose language is the complement of $A_{\exp_2}^{S,E}$ 
\item Compute the A2FA $A$ whose language is the intersection of the languages $A^{S,E}$, $A_{\exp_1}^{S,E}$ and 
$A^c$. 
\item Check that the language of $A$ is empty
\end{enumerate}

We have seen how to perform the second step in polynomial time, and steps (1), (3), (4) can be easily 
performed in \ptime\ using standard techniques from automata theory. Finally, Proposition \ref{prop-nonempt-a2fa} shows that step (5) can be performed in 
\pspace. Thus, all that is left to prove is 
that the language of the resulting automata $A$ is empty if and only if $\exp_1 \subseteq \exp_2$. 

Assume first that $\exp_1 \subseteq \exp_2$, and 
assume for the sake of contradiction that there is a word $w \in L(A)$. 
We have that $w$ must be of form $a_1 \cdots a_{i-1} S a_i \cdots a_{j-1} E a_j \cdots a_n$, and 
$w$ is accepted by $A_{\exp_1}^{S,E}$, but not by $A_{\exp_2}^{S,E}$. Let $S$ be a graph consisting of the semipath 
$u_1, a_1, u_2,\dots,u_n,a_n,u_{n+1}$. By Lemma \ref{lem-2-correctstring}, nodes $(u_i,u_j) \in \sem{\exp_1}_S$, and 
thus by our assumption $(u_i,u_j)$ must belong to $\sem{\exp_2}_S$, but this would imply, again by the lemma, 
that $w$ is accepted by $A_{\exp_2}^{S,E}$. 

On the other hand if $L(A)$ is empty but $\exp_1 \not\subseteq \exp_2$, then for some graph 
$S = u_1,a_1,u_2,\dots,u_n,a_n,u_{n+1}$ that is a semipath and nodes $u_i,u_j$ it is the case that 
$(u_i,u_j) \in \sem{\exp_1}_S$, yet $(u_i,u_j) \notin \sem{\exp_1}_S$. By Lemma \ref{lem-2-correctstring}, 
we have that $w= a_1 \cdots a_{i-1} S a_i \cdots a_{j-1} E a_j \cdots a_n$ is accepted by 
$A_{\exp_1}^{S,E}$, and it is not accepted by $A_{\exp_2}^{S,E}$, thus belonging to $A^c$. Since clearly 
$w$ is also in the language of $A^{S,E}$, this means that $w$ belongs to $L(A)$, which is a contradiction. 


\section{Containment of NREs}

We now turn to the general problem. Let us begin with a few
technical definitions. 

\medskip
\noindent
{\bf $k$-branch semipaths}. 
Of course, when dealing with general graph databases, we cannot longer use the construction of section \ref{sec-words}, 
since it is specifically tailored for strings (or graph that look like paths). 
Nevertheless, we shall prove below that, even for the general case, 
we only need to focus on a very particular type of graphs, that we call here {\em $k$-branch semipaths}. 

Fix a natural number $k$. A $k$-branch domain $D$ is a prefix closed subset of 
$1 \cdot \{1,\dots,k\}^*$ such that 
\begin{enumerate}
\item no element in $D$ is of form 
$\{1,\dots,k\}^* \cdot i \cdot \{1,\dots,k\}^* \cdot j \cdot \{1,\dots,k\}^*$ with $i > j$. 
\item If $w \cdot i$ belongs to $D$ and there is a different element with prefix $w \cdot i$ in $D$, 
then $w \cdot i \cdot i$ belongs to $D$.  
\end{enumerate}

A {\em $k$-branch semipath} over $\Sigma$ is a tuple $T = (D,E)$, where $D$ is a $k$-branch domain, 
and $E \subseteq D \times \Sigma \times D$ respects the structure of the tree: for each $u$ in $D$ there is a single 
edge to each of its {\em children} $u \cdot i, u \cdot j, u \cdot \ell, \dots$ that belong to $D$, and there are no outgoing edges from the 
leaves of $D$.  Note that $k$-branch domains have essentially $k$ types of elements: Each element of the class $[j]$ 
comprises all string of form $s \cdot j$, for $s \in D$, and these elements  
can have children only of classes $[j],\dots,[k]$. We call each of this classes a {\em branch} of the semipath. Note that we impose that 
the element $1$ must be always be the root of a $k$-branch semipath (instead of the usual $\epsilon$. 

\medskip
\noindent
{\bf Canonical graphs for NREs}. 
Let $\exp$ be a NRE. We define the {\em nesting depth} of $\exp$ according to the following inductive definition: 
\begin{itemize}
\item The nesting depth of $a$ or $a^-$ is $1$, for $a \in \Sigma$. 
\item If $\exp_1$ has nesting depth $i$, then $\exp_1^*$ has nesting depth $i$.
\item If $\exp_1$ has nesting depth $i$ and $\exp_2$ has nesting depth $j$, then 
$\exp_1 \cdot \exp2$ and $\exp_1 + \exp_2$ have nesting depth $\max(i,j)$ 
\item If $\exp_1$ has nesting depth $i$ then $[\exp_1]$ has nesting depth $i+1$. 
\end{itemize}

Let $\exp$ be an NRE and $T = (D,E)$ a $k$-branch semipath. 
We now need to define when $T$ is {\em canonical} for $\exp$. We do it in an inductive fashion. 

\begin{itemize}
\item If $\exp = a$, for $a \in \Sigma$, then $T$ is canonical if it contains only two elements, $u$ and $u \cdot i$, and 
the edge from $u$ to $u \cdot i$ is labelled with $a$. 
\item If $\exp = \exp_1 + \exp_2$, then $T$ is canonical for $\exp$ is it is canonical for $\exp_1$ or for $\exp_2$. 
\item If $\exp = \exp_1 \cdot \exp_2$, then $T$ is canonical for $\exp$ if there exists an element $w$ in $T$ such that, 
if we define $T_1$ as the $k$-branch semipath induced by the set of elements $\{w\} \cup \{u \mid w$ is not a prefix of $u\}$ and 
$T_2$ the $k$-branch semipath induced by the set $\{u \mid w$ is a prefix of $u\}$, then 
$T_1$ is canonical for $\exp_1$ and $T_2$ is canonical for $\exp_2$  
(in other words, $T$ is the concatenation of $T_1$ and $T_2$). 
\item If $\exp = \exp_1^*$, then $T$ is canonical for $\exp$ if it contains a single node, or otherwise for some $n \geq 1$ 
there are elements $w_1, w_1 \cdot w_2, \ldots, w_1 \cdot w_2 \cdots w_n$ that define series of 
$T_1,\dots,T_n$ induced subgraphs (as in the previous case), and 
each of these are canonical for $\exp_1$. 
\item If $\exp = [\exp]$, then $T$ is canonical for $[\exp]$ if it is canonical for $\exp$. 
\end{itemize} 

The following proposition highlights the importance of canonical graphs in our context. 
\begin{proposition}
Let $\exp_1$ and $\exp_2$ be NREs, and assume that the nesting depth of $\exp_1$ is $k$. Then 
$\exp_1$ is not contained in $\exp_2$ if and only if there is a $k$-branch semipath $T$ that is canonical 
for $\exp_1$ and two nodes of $T$ such that $(n_1,n_2) \in \sem{\exp_1}_T$ yet $(n_1,n_2) \notin \sem{\exp_2}_T$. 
\end{proposition}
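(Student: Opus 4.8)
The plan is to prove the two directions separately, with essentially all the work in the forward (non-containment $\Rightarrow$ canonical witness) direction. The backward direction is immediate: a $k$-branch semipath $T$ is in particular a graph database over $\Sigma$, so if there are nodes with $(n_1,n_2)\in\sem{\exp_1}_T$ and $(n_1,n_2)\notin\sem{\exp_2}_T$, then $T$ already witnesses $\sem{\exp_1}_T\not\subseteq\sem{\exp_2}_T$, hence $\exp_1\not\subseteq\exp_2$.

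For the forward direction, suppose $\exp_1\not\subseteq\exp_2$, so there is a graph $G$ and a pair $(m_1,m_2)\in\sem{\exp_1}_G\setminus\sem{\exp_2}_G$. First I would isolate the key preservation fact: for any NRE $\exp$ and any homomorphism $h$ of graph databases (a map sending each labelled edge $(x,a,y)$ of $H$ to an edge $(h(x),a,h(y))$ of $H'$), $(u,v)\in\sem{\exp}_H$ implies $(h(u),h(v))\in\sem{\exp}_{H'}$. This is a routine induction on $\exp$: the cases $a$ and $a^-$ are immediate from the definition of $h$, the cases $\cdot$, $+$, ${}^*$ are compositional, and the nesting case $[\exp']$ follows because if $u$ reaches some $w$ by $\exp'$ in $H$ then by the inductive hypothesis $h(u)$ reaches $h(w)$ by $\exp'$ in $H'$, so the loop $(h(u),h(u))$ is witnessed.

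The heart of the argument is to build, from the derivation of $(m_1,m_2)\in\sem{\exp_1}_G$, a $k$-branch semipath $T$ that is canonical for $\exp_1$, together with designated nodes $n_1,n_2$ and a homomorphism $h\colon T\to G$ with $h(n_1)=m_1$ and $h(n_2)=m_2$. I would do this by induction on $\exp_1$, mirroring the semantic decomposition of the witness and maintaining the invariant that a subexpression of nesting depth $d$ yields a canonical graph realized within branches $1,\dots,d$, whose start node (and, for non-test subexpressions, end node) lies on the main branch. For an atomic symbol the canonical graph is a single labelled edge; for $\exp_1\cdot\exp_2$ and for $\exp_1^*$ I glue the inductively obtained pieces along the main branch at the intermediate node(s) supplied by the witnessing decomposition, which keeps the branch count at $\max(d_1,d_2)$ (respectively unchanged); for $+$ I retain whichever disjunct is witnessed. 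The decisive case is $[\exp']$: here the witness is a loop $(m,m)$ justified by some $(m,m')\in\sem{\exp'}_G$, and I attach the inductively built canonical graph for $\exp'$ as a fresh side branch rooted at the test node, shifting its branch indices up by one; since $\exp'$ has nesting depth $d-1$ this yields a $d$-branch structure, which is exactly why the nesting depth of $\exp_1$ bounds the number of branches by $k$. A parallel induction shows that $T$ meets the canonical-for-$\exp_1$ clauses and that $(n_1,n_2)\in\sem{\exp_1}_T$, while every edge of $T$ is placed so as to record the $G$-edge used at the corresponding atomic step, making $h$ a homomorphism into $G$.

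Finally I would close the argument: if $(n_1,n_2)$ were in $\sem{\exp_2}_T$, the preservation fact applied to $h$ would give $(m_1,m_2)=(h(n_1),h(n_2))\in\sem{\exp_2}_G$, contradicting the choice of $G$; hence $(n_1,n_2)\in\sem{\exp_1}_T\setminus\sem{\exp_2}_T$ and $T$ is the required canonical $k$-branch semipath. The main obstacle is the inductive construction of $T$: one must simultaneously check the two defining conditions of a $k$-branch domain (monotone branch indices along each element, and the continuation-child requirement when a node has proper descendants), confirm that the glued and branched pieces still satisfy the inductive canonical clauses, and verify that the branch bookkeeping never exceeds $k$. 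Everything else — the homomorphism preservation lemma and the two concluding implications — is routine.
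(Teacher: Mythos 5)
Your proposal follows the same overall plan as the paper's (sketched) proof: the backward direction is immediate because a $k$-branch semipath is itself a graph database, and the forward direction extracts from a counterexample graph $G$ a canonical $k$-branch semipath $T$ that still witnesses $\exp_1$, then transfers the failure of $\exp_2$ back to $G$. The genuine difference is in the transfer step. The paper \emph{prunes} $G$ into a subgraph $T \subseteq G$ and invokes monotonicity of NREs under subgraph inclusion; you instead \emph{unravel} the derivation of $(m_1,m_2) \in \sem{\exp_1}_G$ into a fresh tree $T$ equipped with a homomorphism $h \colon T \to G$, and invoke preservation of NREs under homomorphisms. Your route is not only different but is the one that actually works in general: a canonical $k$-branch semipath typically cannot be realized as a subgraph of $G$. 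For instance, take $\exp_1 = a \cdot a$ and let $G$ consist of a single self-loop $(u,a,u)$; then $(u,u) \in \sem{\exp_1}_G$, but no subgraph of $G$ is a three-node path, so the paper's ``prune $G$ into $T \subseteq G$'' cannot be carried out literally, whereas your unravelling (mapping all three nodes of the path to $u$) goes through. Since an inclusion is a special case of a homomorphism, your preservation lemma strictly subsumes the paper's monotonicity argument, so you lose nothing and repair the weak point of the paper's sketch; the price is the bookkeeping you correctly identify (branch indices, the continuation-child condition, the bound $k$), which the paper hides behind ``carefully following the construction of $\exp_1$.''

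One caveat: in your nesting case you attach the witness for $\exp'$ as a side branch that the enclosing concatenation skips over, i.e., both endpoints of the $[\exp']$-step are the test node. That is the semantically correct construction, but it does not literally match the paper's inductive definition of canonical, which declares $T$ canonical for $[\exp']$ iff it is canonical for $\exp'$, and glues a concatenation $\exp' \cdot \exp''$ at the \emph{endpoint} of the part for $\exp'$. Read literally, those clauses make a canonical graph for $a \cdot [b] \cdot c$ a straight $a,b,c$-path, which does not even witness the expression (and would falsify the proposition). So your claim that the constructed $T$ ``meets the canonical clauses'' holds for the intended, side-branch reading of the definition, not the printed one; this is a defect of the paper's definition that your construction implicitly corrects, rather than a gap in your argument.
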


\begin{proof}[Sketch]
$(\Leftarrow)$: By definition.

$(\Rightarrow)$: Follows by {\em monotonicity} of NREs, using techniques similar to those in \cite{BPR13}. The idea is as follows. 
Assume that $\exp_1$ is not contained in $\exp_2$. Then there is a graph $G$ and two nodes of $G$ 
such that $(n_1,n_2) \in \sem{\exp_1}_G$ yet $(n_1,n_2) \notin \sem{\exp_2}_G$. By carefully following the construction of $\exp_1$, one 
can prune $G$ into a $k$-branch semipath $T$ (recall that $k$ is the nesting depth of $\exp$) that is canonical for $\exp_1$, and such that 
it still holds that $(n_1,n_2) \in \sem{\exp_1}_T$. 
Since NREs are monotone and $T \subseteq G$ it must be the case that $(n_1,n_2) \notin \sem{\exp_2}_T$
\end{proof}

\subsection{Main Proof}

We now proceed with the \pspace\ upper bound for {\sc NREContainment}. Let $\exp_1$ and $\exp_2$ be NREs over $\Sigma$ that 
are the inputs to this problem, and consider a symbol $\finbra$ not in $\Sigma$. The roadmap of the proof is the following. 

\begin{enumerate}
\item We will first show an encoding scheme $\trans$ that transforms every $k$-branch semipath into a string 
over alphabet $\Gamma_k = \{1,\dots,k\} \times (\Sigma \cup \{\finbra\}) \times \{1,\dots,k\}$.
\item Afterwards, we show that one can construct, given an NRE $\exp$, an automaton $A_\exp$ over $\Gamma_k$ that accepts, 
in a precise sense, all encodings of $k$-branch semipaths that satisfy $\exp$. 
\item Finally we proceed just as in Section \ref{sec-words}, deciding whether $\exp_1 \subseteq \exp_2$ by taking 
the complement of $A_{\exp_2}$, intersecting it with $A_{\exp_1}$, and checking that the resulting automaton defines the empty language. 
\end{enumerate}

\medskip
\noindent
{\bf Coding $k$-branch semipaths as strings}

In the following we show how $k$-branch semipaths over an alphabet $\Sigma$ can be coded into strings. 
Let $\finbra$ be a symbol not in $\Sigma$, and $K = \{1,\dots,k\}$. We use the alphabet 
$\Gamma_k = K \times (\Sigma \cup \{\finbra\}) \times K$, and define the translation inductively. 
Note that we maintain the assumption that strings begin and end with symbols $\com$ and $\fin$, respectively. 
When $K$ is understood from context, we simply talk about $\Gamma$. 

Let $T = (D,E)$ be a $k$-branch semipath. We define $\trans(T)$ as $\com \cdot \trans(1) \cdot \fin$, where $1$ is the root element 
of $T$. For each element in $D$, the relation $\trans$ is defined as follows: 

\begin{itemize}
\item If $w \cdot i$ is a leaf in $T$, then $\trans(w \cdot i)$ is the single symbol $(i,\finbra,i)$. 
\item Otherwise, assume that the children of $w \cdot i$ are $w \cdot i \cdot \ell_1 \dots, w \cdot i \cdot \ell_p$, and the 
label of each edge from $w \cdot i$ to $w \cdot i \cdot \ell_j$ is $a_j$. 
Then $$\trans(w \cdot i) = (i,a_2,\ell_2) \cdot \trans(w \cdot i \cdot \ell_2) \cdot \cdots \cdot (i,a_p,\ell_) \cdot 
\trans(w \cdot i \cdot \ell_p) \cdot (i,a_1,\ell_1) \cdot \trans(w \cdot i \cdot \ell_1)$$
\end{itemize}


Note that the number of characters in $\trans(T)$ is precisely the sum of the number of edges and the number of 
leaves of the $k$-branch semipath $T$. 
We need a way to relate positions in trees with position in their translations. Formally, this is done via a 
function $\posi$, that assigns to every  node $w$ in a $k$-branch semipath $T$, the position in 
$\trans(T)$ that corresponds to the point where the substring $\trans(w)$ starts in $\trans(T)$. Note then that all positions 
in $\trans(T)$ will have a pre-image in $T$, except for those positions that are immediately after a symbol of form 
$(i,\finbra,i)$ in $\trans(T)$. 

Finally, the following proposition shows that the languages of strings represented by semipaths is regular. Moreover, an alternating automaton 
representing this language can be constructed in polynomial time with respect to $k$. 

\begin{proposition}
For each $k \geq 1$ there is an alternating automaton that accepts the language of all strings over $\Gamma_k$ that 
are encodings of a $k$-branch semipath. 
\end{proposition}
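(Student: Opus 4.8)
The plan is to reduce membership of a string $w$ over $\Gamma_k$ in the set of valid encodings to the conjunction of a collection of \emph{local} consistency conditions together with a single \emph{global matching} condition, and then to verify all of these with a two-way alternating automaton (an A2FA, as in Section~\ref{sec-words}) of size polynomial in $k$. From the inductive definition of $\trans$ one reads off the following necessary and sufficient conditions on $w = \com\, c_1 \cdots c_N\, \fin$, where each $c_t = (i_t, b_t, j_t)$. First, \emph{format}: every $c_t$ lies in $\Gamma_k$, every symbol with $b_t = \finbra$ has the shape $(i_t,\finbra,i_t)$ (a \emph{leaf}), the first symbol satisfies $i_1 = 1$ (the root lives in branch $1$), and $c_N$ is a leaf. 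Second, \emph{edge-follow}: whenever $b_t \neq \finbra$ (an \emph{edge}) one has $j_t \geq i_t$ and $i_{t+1} = j_t$. Third, \emph{no two consecutive leaves}; this is exactly what rules out a node having an off-branch child but no same-branch continuation, i.e.\ it enforces condition~$2$ of the definition of a $k$-branch domain. Fourth, \emph{matching}: reading each edge $(i,b,j)$ with $j>i$ as opening a bracket labelled $j$ and each leaf $(j,\finbra,j)$ as closing a bracket labelled $j$ (with same-branch edges $(i,b,i)$ neutral, and $\com$ opening the distinguished bracket of branch $1$), the induced bracket sequence is well-nested and every opener is closed by a leaf carrying the same label.

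The key structural observation, which is what makes a polynomial construction possible, is that an off-branch edge always moves from a branch $i$ to a strictly larger branch $j>i$; hence along any chain of open brackets the labels are strictly increasing, and therefore the nesting depth of the bracket sequence never exceeds $k$.

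I now describe the A2FA. At its root it universally forks into one branch per condition. The format, edge-follow and no-consecutive-leaves conditions are purely local: each is a conjunction, over all positions $t$, of a property depending only on $c_t$ and $c_{t+1}$, so the automaton sweeps left to right and at each position universally spawns a constant-size verifier while continuing the sweep, using only $O(k)$ states in total. For the matching condition the automaton again sweeps and, at every position $t$ carrying an off-branch edge $(i,b,j)$, universally spawns a verifier that moves rightward maintaining a \emph{depth counter}: the counter is initialised to $1$ at $t$, incremented at each further off-branch edge and decremented at each leaf, and the first position where it returns to $0$ is declared the matching close; the verifier accepts iff that position carries a leaf $(j,\finbra,j)$ with the same label. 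By the observation above the counter never exceeds $k$, so it fits in the state and each verifier uses $O(k)$ states; a symmetric leftward scan launched from each leaf (together with treating $\com$ as the branch-$1$ opener and demanding $c_N = (1,\finbra,1)$) locates the matching opener $(i',b',j)$ and checks $i_{t+1}=i'$, which in combination with edge-follow forces every return to land on the correct parent branch.

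The main obstacle is precisely the matching condition: a naive left-to-right recognizer would have to remember the entire stack of currently open brackets, and since this stack is an arbitrary strictly increasing sequence over $\{1,\dots,k\}$ there are exponentially many possibilities, so any one-way deterministic or nondeterministic automaton would be exponential in $k$. The two ingredients that avoid this blow-up are alternation, which lets us verify the matching of each bracket in an independent universal branch instead of tracking all of them at once, and the bounded-depth property, which confines each such per-bracket check to a counter of size $k$. Correctness then follows by matching accepting computations of the automaton against the inductive clauses of $\trans$, exactly in the style of Lemma~\ref{lem-1-correctstring}; and since every component uses $O(k)$ states while $\Gamma_k$ has $O(k^2|\Sigma|)$ symbols, the automaton has polynomial size and is computed in polynomial time in $k$, as claimed.
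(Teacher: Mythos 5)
Your construction takes a genuinely different route from the paper: you build the alternating automaton \emph{directly} for the good strings, using universality to spawn local verifiers and a depth-bounded bracket-matching argument (your observation that open-bracket labels strictly increase, bounding the nesting depth by $k$, is correct and is the right reason the per-bracket counters fit in $O(k)$ states). The paper instead constructs an automaton for the \emph{complement} language, which is just a disjunction of forbidden patterns, and then invokes polynomial-time complementation of alternating automata. Both routes are viable in principle, but your version has a gap that the paper's pattern list was designed to close.

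The gap is that your four conditions are not sufficient, so your automaton over-accepts: nothing in format, edge-follow, no-consecutive-leaves, or label-respecting well-nested matching prevents a node from having \emph{two} children in the same branch class, which is impossible in a $k$-branch semipath because the domain $D$ is a set and a node $u$ has at most one child $u\cdot j$ for each $j$. Concretely, for $k=2$ the string $\com\,(1,a,2)(2,\finbra,2)(1,b,2)(2,\finbra,2)(1,c,1)(1,\finbra,1)\,\fin$ satisfies everything you check: it is well-nested, every opener is closed by a leaf with the same label, every post-closer symbol returns to branch $1$, and no two leaves are adjacent; yet it encodes no $k$-branch semipath, since it would force the root to have two distinct class-$2$ children named $1\cdot 2$. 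This is exactly what item 8 of the paper's list of bad patterns excludes: two symbols $(i,a,j)$ and $(i,b,j)$ occurring with no edge entering branch $i$ between them. The repair is easy in your framework (at each off-branch edge $(i,a,j)$, universally spawn a verifier that rejects if another edge $(i,\cdot,j)$ appears before any symbol of the form $(\ell,\cdot,i)$ with $\ell\le i$), but as written your claimed equivalence between acceptance and being an encoding fails, and that equivalence is the whole content of the proposition. A smaller point: your depth bound is proved only for valid encodings, so the construction must explicitly reject a verifier branch whose counter would exceed $k$; strings satisfying your local conditions can drive the counter arbitrarily high (e.g.\ repetitions of $(1,a,2)(2,b,3)(3,\finbra,3)$), and rejection on overflow is sound precisely because no valid encoding ever reaches depth $k+1$.
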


\begin{proof}
It is usefull to construct first an automaton that accepts the complement of the language in the statement of the proposition. 
Since AFA can be complemented in polynomial time, the proof then follows. 

Fix then a number $k \geq 1$. We now sketch construct an AFA $\A_k$ that accepts all strings over $\Gamma$ which are not encodings of 
a $k$-banch semipath. Essentially, we need $\A_k$ to check for the following: 

\begin{enumerate}
\item The string uses symbol $(i, \finbra, j)$ for some $i \neq j$ in $\{1,\dots,k\}$, or any symbol $(i,a,j)$ for some $a \in \Sigma$ and 
$i > j$. 

\item The string does not start with $\com \cdot (1,a,1)$ for some $a \in \Sigma$ 

\item The string does not ends with the symbol $(1, \finbra,1) \cdot \fin$. 

\item There is more than one appearance of the symbols $\com$ or $\fin$. 

\item For every $i, j \in \{1,\dots,k\}$ and every $a \in \Sigma$, a symbol 
$(i,a,j)$ appears without a forthcoming symbol $(j,\finbra,j)$. 

\item For every $i \in \{1,\dots,k\}$ and every $a \in \Sigma$, the symbol $(i,\finbra,i)$ appears without 
a preceding symbol $(j,a,i)$, for some $j \in \{1,\dots,k\}$

\item There are two appearances of symbols of form $(i,\finbra,i)$ without any symbol of form $(j,a,i)$ in between them, for some 
$j < i$ and $a \in \Sigma$.

\item For some $i,j,\ell \in \{1,\dots,k\}$, $j >i$, $\ell \leq i$ and $a,b \in \Sigma$ there are two symbols $(i,a,j)$ and $(i,b,j)$ 
between symbols $(\ell,a,i)$ and $(i,b,i)$, for some $a',b' \in \Sigma$

\item For every $i, j \in \{1,\dots,k\}$ and every $a \in \Sigma$,  
at some point after the symbol $(i,a,j)$ and before the symbol $(j,\finbra,j)$ there 
is a symbol of form $(i',a',j')$, for $a \in \Sigma$, with either $i'$ or $j'$ strictly lower than $j$. 

\item For every $i, j \in \{1,\dots,k\}$ and every $a \in \Sigma$, after a subword that starts with 
the symbol $(i,a,j)$, has only symbols of form $(\ell,b,\ell')$ for $\ell,\ell' \geq j$ and $b \in \Sigma$, and ends with 
symbol $(j,\finbra,j)$; there is a symbol of form $(p,c,p')$ with $p \neq i$ and $c \in \Sigma \cup \{\finbra\}$. 

\end{enumerate}

It is now straightforward to construct such automaton. Furthermore, since complementation in alternating automata 
can be performed in polynomial time, the proof follows. 
\end{proof}

\medskip
\noindent
{\bf Alternating automata for NREs}

All that remains for the \pspace-upper bound is to show how one can
construct, given an NRE $\exp$, an
A2FA $\A_\exp$ that accepts all strings over $\Gamma$ that
are encodings of $k$-branch semipaths that satisfy $\exp$.

\medskip
\noindent
{\bf Construction of $\A_\exp$: }

We start with some technical definitions. For every $i,j \in \{1,\dots,k\}$, define the 
language $L_{(i,j)}$ as follows: 

$$L_{(i,j)} = (i,a,j) \cdot \big(\{(\ell,b,\ell') \in \Gamma \mid \ell, \ell \geq j \text{ and } b \in \Sigma\}\big)^* \cdot (j,\finbra,j). $$

Intuitively, each $L_{(i,j)}$ defines path that departs from level $i$ to level $j$ in the $k$-branch semipath. 

The main technical difficulty in this construction is to allow the
automaton to navigate through the encoding of the $k$ branch semipath.
This is mostly captured by the base cases of our inductive
construction the idea is that one now has to allow the automata to skip words  
of form $L_{(i,j)}$ when choosing the next symbol (or when looking for it when reaching backwards), 
or in fact allow it to jump to a different branch in the semipath. We shall therefore make repeated use of 
the languages $L_{(i,j)}$. We also define, for each $1 \leq i \leq k$, the language 

$$B_i = \big(L_{(i,i)} + L_{(i,i+1)} + \cdots + L_{(i,k)}\big)^*.$$

Finally, we also work with a 2 way automaton $A_{B_j^-}$ that {\em read backwards} from symbol 
$(j,\finbra,j)$ to the first symbol of form $(i,a,j)$ for some $i < j$. More precisely, 
$B_j^- = (Q,q_0, \emptyset,\{q_f\},\Gamma,\delta)$, where $Q = \{q_0,q_1,q_2,q_f\}$ and 
$\delta$ is as follows: 
\begin{itemize}
\item For each symbol $a \in \Gamma$, $\delta(q_0,a) = \{q_1,-1\}$. This moves the automaton a step backwards, so we can 
start checking our language. 
\item In addition, $\delta(q_1,(j,\finbra,j)) = \{(q_2,-1\}$. This piece checks that we start with $(j,\finbra,j)$. 
\item For each $a \in \Sigma \cup \{\finbra\}$ and $k,k' \geq j$, $\delta(q_2,(k,a,k')) = \{(q_2,-1)\}$. This forces the automaton to 
loop in this state if one does not find the start of the branch. 
\item Finally, for each $a \in \Sigma$ and $i < j$, $\delta(q_2,(i,a,j)) = \{(q_f,0)\}$. This simply checks that the last symbol marks the beginning 
of the branch. 
\end{itemize}

With this definitions we can start describing the construction. Let $\exp$ be an NRE. The automaton $\A_\exp$ 
for $\exp$ is as follows. 
\begin{itemize}
\item If $\exp = a$ for some $a \in \Sigma'$. For each $1 \leq i \leq k$, let $A_{B_i}$ be a copy of an automaton accepting 
$B_i$, using fresh states, and assume that their initial and final states, respectively, are $p_0^i$ and $p_f^i$. Furthermore, 
for each $1 \leq j \leq k$ create a fresh copy of the automaton $A_{B_j^-}$, with initial and final states $(p_0^j)^-$ and $(p_f^j)^-$, respectively. 
then $\A_\exp = (Q,q_0,\emptyset,\{q_f\},\Gamma,\delta)$, where $Q$ contains $\{q_0,q_r^1,q_r^2,q_f\}$ plus all the states of the 
automata $A_{B_i}$ and $A_{B_j^-}$ for $i,j \in \{1,\dots,k\}$, and $\delta$ contains, apart from all the transitions in 
the $A_{B_i}'s$ and $A_{B_j^-}'s$, the following transitions: 
\begin{itemize}
\item For each $1 \leq i \leq j \leq k$, $\delta(q_0,(i,a,j)) = \{(q_f,1),(q_r^1,-1)\}$
|item For each $1 \leq i \leq j \leq k$ and $b \in \Sigma$ $(b \neq a)$, $\delta(q_0,(i,b,j)) = \{(q_r^1,-1)\}$
\item $\delta(q_0,\epsilon) = \{(p_0^1,0), \ldots, (p_0^k,0)\}$
\item $\delta(p_f^i,\epsilon) = \{(q_0,0)\}$ for each $1 \leq i \leq k$. 
\item For each $1 \leq i \leq j \leq k$, $\delta(q_r^1,(i,a^-,j)) = \{(q_r^2,0)\}$
\item $\delta(q_r^2, \epsilon) = \{(q_f, 0), (p_f^1)^-,0),\ldots,((p_f^k)^-,0)\}$
\item $\delta((p_f^i)^-,\epsilon) = \{(q_r^2,0)\}$ for each $1 \leq i \leq k$
\end{itemize}

State $q_f$ is {\em marked}.

\item If $\exp = a^-$ for some $a \in \Sigma'$. For each $1 \leq i \leq k$, let $A_{B_i}$ be a copy of an automaton accepting 
$B_i$, using fresh states, and assume that their initial and final state, respectively, are $p_0^i$ and $p_f^i$. Furthermore, 
for each $1 \leq j \leq k$ create a fresh copy of the automaton $A_{B_j^-}$, with initial and final states $(p_0^j)^-$ and $(p_f^j)^-$, respectively. 
then $\A_\exp = (Q,q_0,\emptyset,\{q_f\},\Gamma,\delta)$, where $Q$ contains $\{q_0,q_r^1,q_r^2,q_f\}$ plus all the states of the 
automata $A_{B_i}$ and $A_{B_j^-}$ for $i,j \in \{1,\dots,k\}$, and $\delta$ contains, apart from all the transitions in 
the $A_{B_i}'s$ and $A_{B_j^-}'s$, the following transitions: 
\begin{itemize}
\item For each $1 \leq i \leq j \leq k$, $\delta(q_0,(i,a^-,j)) = \{(q_f,1),(q_r^1,-1)\}$
|item For each $1 \leq i \leq j \leq k$ and $b \in \Sigma$ $(b \neq a^-)$, $\delta(q_0,(i,b,j)) = \{(q_r^1,-1)\}$
\item $\delta(q_0,\epsilon) = \{(p_0^1,0), \ldots, (p_0^k,0)\}$
\item $\delta(p_f^i,\epsilon) = \{(q_0,0)\}$ for each $1 \leq i \leq k$. 
\item For each $1 \leq i \leq j \leq k$, $\delta(q_r^1,(i,a,j)) = \{(q_r^2,0)\}$
\item $\delta(q_r^2, \epsilon) = \{(q_f, 0), (p_f^1)^-,0),\ldots,((p_f^k)^-,0)\}$
\item $\delta((p_f^i)^-,\epsilon) = \{(q_r^2,0)\}$ for each $1 \leq i \leq k$
\end{itemize}

State $q_f$ is {\em marked}.

\item Case when $\exp = \exp_1 + \exp_2$. Let $A_{\exp_i} = (Q^i, U^i, q_0^i, F", \Gamma, \delta^i)$, for $i = 1,2$, 
and assume that $q_m^i$ is the marked state from $A_{\exp_i}$. 
Define $A_\exp = ( Q, U, q_0, F, \Gamma, \delta)$, where $Q = \{q_0,q_f\} \cup Q^1 \cup Q^2$, $U = U^1 \cup U^2$, 
$F = \{q_f\} \cup (F^1 \setminus \{q_m^1\}) \cup (F^2 \setminus \{q_m^2\})$ and $\delta = \delta^1 \cup \delta^2$, plus transitions 
\[
\begin{array}{lcl}
\delta(q_0, \epsilon) & = & \{(q_0^1,0),(q_0^2,0)\} \\
\delta(q_m^1,\epsilon) & = & \{(q_f,0)\} \\
\delta(q_m^2,\epsilon) & = & \{(q_f,0)\} \\
\end{array}
\]

For each $i = 1,2$, remove al marks from $A_{\exp_i}$, and 
{\em mark} state $q_f$.

\item In the case that $\exp = \exp_1 \cdot \exp_2$, let $A_{\exp_i} = (Q^i, U^i, q_0^i, F", \Gamma, \delta^i)$, for $i = 1,2$, 
and assume that $q_m^i$ is the marked state from $A_{\exp_i}$. 
For each $i = 1,2$, remove all markings from $A_{\exp_i}$. 
Define $A_\exp = (Q, U, q_0, F, \Gamma, \delta)$, where $Q = \{q_0,q_f\} \cup Q^1 \cup Q^2$, $U = U^1 \cup U^2$, 
$F = \{q_f\} \cup (F^1 \setminus \{q_m^1\}) \cup (F^2 \setminus \{q_m^2\})$ and $\delta = \delta^1 \cup \delta^2$, plus transitions 

\[
\begin{array}{lcl}
\delta(q_0, \epsilon) & = & \{(q_0^1,0)\} \\
\delta(q_m^1,\epsilon) & = & \{(q_0^2,0)\} \\
\delta(q_m^2,\epsilon) & = & \{(q_f,0)\} \\
\end{array}
\]

For each $i = 1,2$, remove al marks from $A_{\exp_i}$, and 
{\em mark} state $q_f$.

\item For $\exp = \exp_1^*$, let $A_{\exp_1} = (Q^1,U^1,q_0^1, F^1, \Gamma, \delta^1, F^1)$, 
and assume that $q_m^1$ is the marked state from $A_{\exp_1}$. 

Define $A_\exp = (Q, U^1, q_0, F, \Gamma, \delta)$, where $Q = \{q_0,q_f\} \cup Q^1$, 
$F = \{q_f\} \cup (F^1 \setminus \{q_m^1\})$ and $\delta = \delta^1$ plus transitions 
\[
\begin{array}{lcl}
\delta(q_0, \epsilon) & = & \{(q_0^1,0)\} \\
\delta(q_0^1, \epsilon) & = & \{(q_f,0)\} \\
\delta(q_m^1,\epsilon) & = & \{(q_f,0), (q_0^1,0)\} \\
\end{array}
\]

Remove al marks from $A_{\exp_1}$, and 
{\em mark} state $q_f$.

\item When $\exp = [\exp_1]$, let $A_{\exp_1} = (Q^1,U^1, q_0^1, F^1, \Gamma, \delta^1)$, 
and assume that $q_m^1$ is the marked state from $A_{\exp_1}$. 
Then 
$A_\exp = (Q, U^1, q_0, F, \Gamma, \delta)$, where $Q = \{q_0,p,q_2,q_f\} \cup Q^1$, $U = U^1 \cup \{p\}$,
$F = \{q_f\} \cup F^1$ and $\delta = \delta^1$, plus transitions 
\[
\begin{array}{lcl}
\delta(q_0, \epsilon) & = & \{(p,0)\} \\
\delta(p,\epsilon) & = & \{(q_f,0), (q_i^1,0)\} \text{ (recall that } p \text{ is a universal state)} \\
\delta(q_m^1,a) & = & \{(q_m^1,1)\} \text{ for each }a \in \Gamma \\
\end{array}
\]
Remove al marks from $A_{\exp_1}$, and 
{\em mark} state $q_f$.

\end{itemize} 

Let $\A_\exp = (Q,q_0,U,F,\Gamma,\delta)$ be as constructed by this algorithm. To finish our construction 
we need to allow $\A_\exp$ to (non deterministically) move backwards from the end of the word, until it reaches a suitable 
starting point for the computation, and allow every final state to reach the end of the word in its computation. 
Formally, we define $\A_\exp' = (Q \cup \{q_0'\},q_0',U,F,\Gamma,\delta')$, where 
$\delta'$ contains all transitions in $\delta$ plus transitions 
$\delta(q_0',a) = \{(q_0,0),(q_0',-1)\}$ for each $a \in \Gamma$ and $\delta(q_f,a) = (q_f,1)$ for 
each $a \in \Gamma$. 
In the remainder of the proof, when speak of the automata for $\exp$ we refer to this last automaton $A_\exp'$, 
even if we use the clearer $A_\exp$ instead. 

The rest of the proof goes along the same lines as the version for semipaths. 
Notice that the above construction can be computed in polynomial 
time with respect to $\exp$. Furthermore, 
let $q_m$ be the marked state of $A_\exp'$. From its construction, 
it is clear that every accepting computation tree $\Pi$ of $A_\exp$ on input 
$w$ will have the following form: 
(1) For some $1 \leq i \leq |w|$ there is a single path from the root to a node $\pi_s$ such that $\l(\pi) = (q_0,w,i)$ and no ancestor 
of $\pi$ is labelled with an ID using a state different from $q_0'$; and (2) there is 
some $1 \leq j \leq |w|$ such that $\pi_f, \pi_f',\pi_f'',\dots $ is the maximal path of nodes (up to a leaf) labelled with 
$(q_m,w,j)$, $(q_m,w,j+1),\dots,(q_m,w,|w|+1)$, and where the father of $(q_m,w,j)$ is not a configuration using state $q_m$. 
Property (1) represents the automaton searching for its starting point, and (2) represents the 
end of the computation of the part of $\A_\exp'$ that is representing the non-nesting part of $\exp$. 
We denote such nodes $\pi_s$ and $\pi_f$ as the {\em tacit start} and {\em tacit ending} of $\Pi$.

With this definitions we can show the following.

\begin{lemma}
\label{lem-1-correctbranch}
Let $\exp$ a NRE, $A_\exp$ the automaton constructed for $\exp$, 
$T$ a graph over $\Sigma$ that is a $k$-branch semipath, where $k$ is the nesting depth of $\exp$, 
and $w = \trans(T)$ be the encoding of $T$ as a string. 
Then a pair $(u,v)$ belongs to $\sem{\exp}_T$ if and only if 
there is an accepting computation tree of $A_\exp$ on input $w$ whose tacit start is 
labelled with $(q_0,w,\posi(u))$ and whose tacit ending is labelled with  $(q_m,w,\posi(v))$. 
\end{lemma}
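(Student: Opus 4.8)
The plan is to prove the lemma by structural induction on $\exp$, establishing both directions simultaneously, exactly mirroring the proof of Lemma~\ref{lem-1-correctstring}. To make the induction go through on nested subexpressions, I would actually prove the slightly stronger statement in which $k$ is allowed to be any value at least the nesting depth of $\exp$. This strengthening is needed because, in the nesting case, the automaton $A_{\exp_1}$ built for a subexpression $\exp_1$ of smaller nesting depth is still an automaton over the \emph{same} alphabet $\Gamma_k$, so the inductive hypothesis must apply for the ambient $k$ rather than for the nesting depth of $\exp_1$ alone.

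The inductive cases ($\exp_1+\exp_2$, $\exp_1\cdot\exp_2$, $\exp_1^*$, and $[\exp_1]$) transfer almost verbatim from Lemma~\ref{lem-1-correctstring}, because the construction of $\A_\exp$ in these cases is identical up to the change of alphabet: the marked-state plumbing with $\epsilon$-transitions is unchanged. Thus the only-if direction proceeds by cutting an accepting computation tree at its tacit ending and plugging in the computation tree supplied by the inductive hypothesis for the next subexpression (for $\cdot$ and $*$), or by selecting one disjunct (for $+$); the if direction prunes the tree at the tacit start to recover sub-computations and invokes the inductive hypothesis. The case deserving care is $[\exp_1]$: here the universal state $p$ forks into the main thread, which immediately reaches $q_f=q_m$ at the same position (so the tacit start and tacit ending coincide, forcing $\posi(u)=\posi(v)$ and hence $u=v$, as required by $\sem{[\exp_1]}_T$), and a test thread that runs $A_{\exp_1}$ on the same input $w$ from $\posi(u)$. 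I would invoke the inductive hypothesis on $\exp_1$ to conclude that the test thread accepts iff some node is reachable from $u$ under $\exp_1$. This is exactly the step that consumes the nesting depth, and where the strengthened induction statement is used.

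The real work is in the two base cases $\exp=a$ and $\exp=a^-$, where the automaton must navigate the tree encoding rather than a linear word. The key combinatorial fact I would isolate as a sub-claim is a characterization of $\trans$: for a node $x$ at level $j$ reached from a parent at level $i\le j$, the substring starting at $\posi(x)$ is exactly a word of $L_{(i,j)}$, and more generally $B_i$ matches precisely a concatenation of complete encodings of consecutive sibling subtrees hanging off a level-$i$ node. Granting this, the forward navigation is clear: starting at $\posi(u)$ with $u$ at level $i$, the automaton either reads the first listed edge symbol $(i,a,\ell)$ directly, or uses $\delta(q_0,\epsilon)$ to enter $A_{B_i}$, skips a block of sibling-subtree encodings matching a prefix of $B_i$, returns to $q_0$, and then reads the desired edge symbol $(i,a,\ell)$; advancing one step lands exactly on $\posi(v)$ for the corresponding child $v$. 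The backward navigation, used when the relevant edge is the one entering the current branch (a reverse-labelled edge, giving a child-to-parent pair in $\sem{a}_T$), is handled symmetrically by the states $q_r^1,q_r^2$ together with the backward reader $A_{B_j^-}$, which walks back over a complete subtree, staying at levels $\ge j$, until it reaches the entry edge $(i,a,j)$ with $i<j$ and stops at $\posi$ of the parent.

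I expect the base-case verification to be the main obstacle. Establishing the sub-claim about $\trans$, $L_{(i,j)}$ and $B_i$ requires a careful induction on the tree structure together with an analysis of the level annotations in $\Gamma_k$: in particular, that children of a level-$i$ node carry levels $\ge i$, that each $L_{(i,j)}$ brackets exactly one subtree (so the automaton can neither overshoot nor halt inside a subtree), and that the $i<j$ versus $\ge j$ conditions built into $A_{B_j^-}$ single out the unique branch-entry edge. Both directions of the base case hinge on this bracket-matching property: for only-if it guarantees a legal skipping/back-up route to $\posi(v)$, and for if it guarantees that whenever $\A_\exp$ successfully navigates from $\posi(u)$ to $\posi(v)$, the symbol it consumed is a genuine edge of $T$ between $u$ and $v$, so that $(u,v)\in\sem{a}_T$. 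Once the sub-claim is in place, the remainder of the base case is a routine check of the transition tables.
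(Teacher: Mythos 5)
Your proposal follows essentially the same route as the paper's own proof: induction on $\exp$, with the inductive cases ($+$, $\cdot$, $*$, $[\,\cdot\,]$) inherited from the semipath lemma (Lemma~\ref{lem-1-correctstring}) since the marked-state construction is unchanged, and the real work concentrated in the base cases $a$ and $a^-$, where the automaton navigates the encoding via the $B_i$, $L_{(i,j)}$ and $B_j^-$ components exactly as you describe. Your two refinements --- strengthening the induction hypothesis to any $k$ at least the nesting depth (so it applies to subexpressions built over the ambient alphabet $\Gamma_k$) and isolating the bracket-matching property of $\trans$ as an explicit sub-claim --- are sound and in fact tighten points the paper leaves implicit.
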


\begin{proof}
Let $T$ be a $k$-branch semipath and let $A_\exp = (Q,U,q_0,F,\Gamma,\delta)$ constructed as above. 
Let us start with the {\bf Only if direction}. Assume that $\sem{\exp}_T$ contains the pair  $(u,v)$ for some nodes $u$ and 
$v$ of $T$. We show the statement of the Lemma by induction on $R$. 

We only show the case when $\exp = a$. The case when $\exp = a^-$ is completely symmetrical, and the remaining ones 
follow from the proof of Lemma \ref{lem-1-correctstring}.  
\begin{itemize}
\item If $R = a$ for some $a \in \Sigma'$ and $(u,v)$ belong to $\sem{\exp}_T$, then either $u$ is a prefix of $v$ and the edge 
$(u,a,v)$ is in $T$, or $v$ is a prefix of $u$ and the edge $(v,a^-,u)$ is in $T$. 
For the former case, assume that $u = w \cdot i$, all children of $u$ are $u_1,\dots,u_n$, and $v = u_\ell$ for some $1 \leq \ell \leq n$. If 
$v = w \cdot i \cdot i$, then starting in $\posi(u)$ one can use the transitions that loop in some of the $B_i$'s until we reach symbol 
$(i,a,i)$ in $\trans(T)$, from which we advance to the final state of $A_\exp$. Otherwise, If $v = w \cdot i \cdot j$ for some $i < j$, 
we can also loop, but this time until we advance to the final state by means of symbol $(i,a,j)$. 
For the latter case, assume that $v = w \cdot i$,  all children of $v$ are $v_1,\dots,v_n$, and $u = v_\ell$ for some $1 \leq \ell \leq n$. If 
$u = w \cdot i \cdot i$, then by definition the symbol $(i,a^-,i)$ is directly before $\posi(u)$. We can then non-deterministically jump 
to $q_r^1$ in $A_\exp$, check that effectively the symbol $(i,a^-,i)$ exists, and move backwards according to the transitions looping in 
the copies of automata $B_j^-$'s, until we reach $\posi(v)$. Otherwise if $u = w \cdot i \cdot j$ with $i < j$ then by definition again 
the symbol $(i,a^-,j)$ is directly before $\posi(u)$, and we continue along the same lines as before. 
\end{itemize}

Next, for the {\bf If} direction, assume there is an accepting computation tree of $A_\exp'$ on input $w$ whose tacit start is 
labelled with $(q_0,w,\posi(u))$ and whose tacit ending is labelled with  $(q_m,w,\posi(v))$. We show that $(u,v) \in \exp_R$ by induction. 
Once again, it suffices to show the base case when $\exp = a$. 

\begin{itemize}
\item If $R = a$ for some $a \in \Sigma'$, there are two types of computation tree for $A_\exp$. 
Assume first that such tree does not mention any node labelled with an 
ID that corresponds to $q_r^1$ or $q_r^2$. Then, at some point in the computation tree, there must be a jump from an 
ID of form $(q_0,w,i)$ to an ID of form $(q_f,w,j)$ for some positions $i$ and $j$ in $w$, and when reading a symbol of form 
$(\ell_1,a,\ell_2)$. From the construction of $A_R$, 
we can only loop from state $q_f$ if we are directly after a symbol of form $(\ell', \finbra,\ell')$, and thus in this case we can not loop; it must be that 
$j = \posi(v)$. Furthermore, if one stays in state $q_0$ one can only move forward, in a way that the subword 
between position $\posi(u)$ and $i$ must correspond to 
a concatenation of word in some $L_{(i,j)}$s. Then either $i = \posi(u)$ or $u$ has at least two children, and position  
$i$ corresponds to the position in $w$ after we have read the encoding for some of these children. It then follows from our translation $\trans$ 
that the edge between $u$ and $v$ in $T$ is labelled $a$.
 
Next, assume that the tree does mention an ID going through $q_r^1$. In this case, there must be a step from $q_0$ to $q_r^1$ that 
is a move backwards, and 
then to advance to $q_r^2$ we need a symbol of form $(\ell_1,a^-,\ell_2)$. In other words, at some point in the tree we move from 
ID $(q_0,w,i)$ to $(q_r^1,w,i-1)$ and then to $(q_r^2,w,i-1) $, and such that the symbol between positions $i-1$ and $i$ is 
of the form $(\ell_1,a^-,\ell_2)$. It follows that $i = \posi(u)$, since by moving forwards in $q_0$ we shall never reach a point directly after 
a symbol with this form. A similar argument as the previous case also shows that either $\posi(v) = i-1$ ot the 
path from $\posi(v)$ to $i-1$ must correspond to 
a concatenation of words in $L_{(i,j)}$'s. By inspecting our translation, we then have that there must be an edge $(v,a^-,u)$ in $T$, and 
therefore $(u,v) \in \sem{\exp}_T$. 
\end{itemize}
\end{proof}

Just as we saw for the case of semipaths, we need to be more careful, and explicitly mark with symbols $S$ and $E$ to positions 
in $\trans(T)$, in order to distinguish the root and leaves of $T$ with the actual $k$-branch semipath that is framed by nodes 
$u$ and $v$. Formally, given a $k$-branch semipath $T$, and two nodes $u$ and $v$ of $T$, the expansion 
$T[u \rightarrow S,v \rightarrow E]$ is the $k$-branch semipath defined as follows. 
If $u = w \cdot i$, and its children $w \cdot i \cdot \ell_1, \dots, w \cdot i \cdot \ell_n$, then rename all children 
to $w \cdot i \cdot i \cdot \ell_1, \dots, w \cdot i \cdot i \cdot \ell_n$, and all of the descendants of $u$ accordingly, so that the 
domain remains prefix-closed. Now $u$ has a single child, $w \cdot i \cdot i$ connected by an edge labelled $S$, and this node 
is the father of all the nodes that were previously childrens of $u$. Repeat with $v$ and $E$. 
The intuition is that $T[u \rightarrow S,v \rightarrow E]$ is created by replacing node $u$ in $T$ with an edge labelled by $S$, and 
node $v$ by an edge labelled $E$. 
Let $\exp$ be an NRE. Using the ideas presented in the proof of Lemma \ref{lem-2-correctstring} it is not difficult to 
define a translation from $A_\exp$ to an automaton $A_\exp^{S,E}$ such that the following holds: 

\begin{lemma}
\label{lem-2-correctbranch}
Let $T$ be a $k$-branch semipath, and $R$ an NRE. Then a pair $(u,v)$ belongs to $\sem{\exp}_T$ if and only if 
$A_\exp^{S,E}$ accepts the semipath $T[u \rightarrow S,v \rightarrow E]$. 
\end{lemma}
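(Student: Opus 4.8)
The plan is to follow the template of Lemma~\ref{lem-2-correctstring}, adapting the marker construction from linear words to the tree encoding $\trans$. First I would make the automaton $A_\exp^{S,E}$ explicit. Starting from $A_\exp$ with its marked state $q_m$, I augment $\Gamma$ with the marker symbols $(i,S,i)$ and $(i,E,i)$ for each $1 \leq i \leq k$ (these are exactly the symbols that the $S$- and $E$-edges of $T[u \rightarrow S, v \rightarrow E]$ produce under $\trans$), and I add two fresh states $q_0^S$ and $q_f^E$. The new initial state $q_0^S$ sweeps backwards until it meets an $S$-marker and then hands control to $q_0$: I put $(q_0^S,-1)$ into $\delta(q_0^S,a)$ for every non-$S$ symbol $a$ and $(q_0,1)$ into $\delta(q_0^S,(i,S,i))$. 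The marked branch is rerouted through $E$: I put $(q_f^E,1)$ into $\delta(q_m,(i,E,i))$ for each $i$ and let $q_f^E$ run forward to the end of the word. Every remaining state must treat the two markers as transparent, so I add forward skip moves over $(i,S,i)$ and $(i,E,i)$ to the computation states exactly as in the string case, and, crucially, backward skip moves for the backward-navigating states $q_r^1, q_r^2$ and the states of the $A_{B_j^-}$ copies. The final states become $(F \setminus \{q_m\}) \cup \{q_f^E\}$.

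Next I would pin down how $\trans(T[u \rightarrow S, v \rightarrow E])$ compares with $\trans(T)$. By the definition of the expansion, replacing $u = w \cdot i$ by an $S$-edge simply inserts the single symbol $(i,S,i)$ immediately before the block $\trans$ assigns to $u$'s former subtree, and likewise for $v$ and $(i,E,i)$; the relative order of all the $L_{(i,j)}$-blocks that $A_\exp$ navigates is untouched. Hence, modulo the shift induced by the other insertion, the $S$-marker occupies exactly position $\posi(u)$ and the $E$-marker exactly $\posi(v)$. This is the observation that lets the markers assume the role that the bare positions $\posi(u), \posi(v)$ played in Lemma~\ref{lem-1-correctbranch}.

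The equivalence itself is then a transport of computation trees. For the forward direction, given $(u,v) \in \sem{\exp}_T$, Lemma~\ref{lem-1-correctbranch} supplies an accepting computation tree of $A_\exp$ on $\trans(T)$ with tacit start $(q_0, w, \posi(u))$ and tacit ending $(q_m, w, \posi(v))$; I replay it on $\trans(T[u \rightarrow S, v \rightarrow E])$, prefixing the backward sweep of $q_0^S$ down to the $S$-marker, rerouting the tacit ending through $q_f^E$ at the $E$-marker, and invoking the skip moves whenever a navigational branch crosses a marker. The converse prunes an accepting run of $A_\exp^{S,E}$ at its first entry into $q_0$ and at the switch from $q_m$ to $q_f^E$, recovering a computation tree of $A_\exp$ with exactly the required tacit start and ending, and then applies Lemma~\ref{lem-1-correctbranch} in the other direction.

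The step I expect to be the main obstacle is the interaction of the marker symbols with the subtree-skipping machinery. In the string case ``transparency'' meant stepping one cell past $S$ or $E$; here a navigational state living inside a $B_i$ or $A_{B_j^-}$ copy may meet a marker in the middle of an $L_{(i,j)}$-block it is traversing, so I must verify that declaring $(i,S,i)$ and $(i,E,i)$ transparent for these states neither lets the automaton cross a branch boundary it should respect nor blocks a legitimate traversal. Checking that the degenerate single-child $S$- and $E$-branches are absorbed cleanly by these loops, so that the tacit start and ending land on precisely $\posi(u)$ and $\posi(v)$ and nowhere else, is the one place where the tree encoding genuinely demands more care than the linear argument of Lemma~\ref{lem-2-correctstring}.
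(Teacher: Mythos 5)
Your proposal is correct and takes essentially the route the paper intends: the paper states this lemma without any proof, merely asserting that the marker construction of Lemma~\ref{lem-2-correctstring} adapts to the encoding $\trans$, and your explicit automaton $A_R^{S,E}$, the observation that the expansion inserts $(i,S,i)$ and $(i,E,i)$ exactly at positions $\posi(u)$ and $\posi(v)$, and the transport of computation trees via Lemma~\ref{lem-1-correctbranch} are precisely that adaptation carried out in detail. Your insistence on \emph{backward} skip moves over the markers for $q_r^1$, $q_r^2$ and the $A_{B_j^-}$ copies is in fact a necessary correction rather than an optional refinement: the paper's string-case construction only adds forward skips $(q,1)$ on $S$ and $E$, which would strand the backward-navigating states at a marker, so your version is more careful than the template it follows.
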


We can now state our algorithm for solving {\sc SP-QueryContainment}. 
On input NREs $\exp_1$ and $\exp_2$ over $\Sigma$, we perform the following operations: 
\begin{enumerate}
\item Compute an NFA $\A^{S,E}$ that accepts only those words over 
$(\Sigma \cup \{S,E\})^*$ of form $w_1Sw_2Ew_3$, for each $w_1,w_2,w_3$ in $\Sigma^*$.  
\item Compute $\A_k$ that accepts only those words which are translations of 
$k$-branch semipaths over $\Sigma$, where $k$ is the nesting depth of $\exp_1$. 
\item Compute $\A_{\exp_1}^{S,E}$ and $A_{\exp_2}^{S,E}$ as explained above. 
\item Compute the A2FA $A^c = (A_{\exp_2}^{S,E})^c$ whose language is the complement of $A_{\exp_2}^{S,E}$ 
\item Compute the A2FA $A$ whose language is the intersection of the languages $A^{S,E}$, $A_{\exp_1}^{S,E}$,  $A^c$ 
and $A_\trans$. 
\item Check that the language of $A$ is empty
\end{enumerate}

We have seen how to perform the second step in polynomial time, and steps (1), (3), (4) can be easily 
performed in \ptime\ using standard techniques from automata theory. Finally, Proposition \ref{prop-nonempt-a2fa} shows that step (5) can be performed in 
\pspace. Thus, all that is left to prove is 
that the language of the resulting automata $A$ is empty if and only if $\exp_1 \subseteq \exp_2$. 

Assume first that $\exp_1 \subseteq \exp_2$, and 
assume for the sake of contradiction that there is a word $w \in L(A)$. This word is then accepted by $\A^{S,E}$ and $\A_k$. 
Then there is a $k$-branch semipath 
$T$ over $\Sigma$ and two nodes $u$ and $v$ of $T$ such that $w = \trans(T[u \rightarrow S,v \rightarrow E])$. Furthermore,  
$w$ is accepted by $A_{\exp_1}^{S,E}$, but not by $A_{\exp_2}^{S,E}$. This implies, by Lemma \ref{lem-2-correctbranch}, 
that $(u,v) \in \sem{\exp_1}_T$ but $(u,v) \in \sem{\exp_2}_T$, which is a contradiction. 

On the other hand if $L(A)$ is empty but $\exp_1 \not\subseteq \exp_2$, then for some 
$k$-branch semipath $T$ that is canonical for $\exp_1$ and two nodes $u$ and $v$ of $T$ we have that 
$(u,v) \in \sem{\exp_1}_T$ yet $(u,v) \notin \sem{\exp_2}_T$. By Lemma \ref{lem-2-correctbranch} 
we have that $w = \trans(T[u \rightarrow S,v \rightarrow E])$ is such that 
$w$ belongs to $\A^{S,E}$, $\A_k$ and $A_{\exp_1}^{S,E}$, and it is not accepted by $A_{\exp_2}^{S,E}$, thus belonging to $A^c$. 
This means that $w$ is in the language of $A$, which is a contradiction.

\bibliographystyle{abbrv}
\bibliography{biblio}

\end{document}